\newcommand*{\TCNLA}[1][]{}  % 2-column newline and align
\newcommand{\shortcite}{\cite}
\newcommand{\fact}{\frac{1}{5nm}}
\newcommand{\dEFX}{tEFX}
\newcommand{\abs}[1]{| #1 |}
\newcommand{\argmin}{\text{argmin}}
\newcommand{\payment}{payment}
\algnewcommand{\LineComment}[1]{\State \textcolor{gray}{\texttt{//} \textit{#1}}}
\newtheorem{lemma}{Lemma}
\newtheorem{observation}{Observation}
\newtheorem{definition}{Definition}
\newtheorem{invariant}{Invariant}
\newcommand{\charity}{surplus}
\newcommand{\MPB}{\text{MPB}}
\title{Fair and Efficient Allocation of Indivisible Chores with Surplus}
\date{\empty}
\author[1,2]{Hannaneh Akrami}
\affil[1]{Max Planck Institute for Informatics, Germany}
\affil[2]{Graduiertenschule Informatik, Universit\"at des Saarlandes, Germany}
\author[3]{Bhaskar Ray Chaudhury\thanks{Bhaskar Ray Chaudhury was partially supported by NSF Grants CCF-1942321 and CCF-1750436.}}
\affil[3]{University of Illinois at Urbana-Champaign, USA}
\author[3]{Jugal Garg\thanks{Jugal Garg was supported by NSF Grant CCF-1942321.
}}
\author[1]{Kurt Mehlhorn}
\author[3]{Ruta Mehta\thanks{Ruta Mehta was supported by NSF Grant CCF-1750436. 
}}
\begin{document}
\maketitle

\begin{abstract}
    We study fair division of indivisible chores among $n$ agents with additive disutility functions. Two well-studied fairness notions for indivisible items are envy-freeness up to one/any item (EF1/EFX) and the standard notion of economic efficiency is Pareto optimality (PO).  
    There is a noticeable gap between the results known for both EF1 and EFX in the goods and chores settings. The case of chores turns out to be much more challenging. We reduce this gap by providing slightly relaxed versions of the known results on goods for the chores setting. Interestingly, our algorithms run in polynomial time, unlike their analogous versions in the goods setting.  
    
    We introduce the concept of $k$ \charity~
    which means that up to $k$ more chores are allocated to the agents and each of them is a copy of an original chore. We present a polynomial-time algorithm which gives EF1 and PO allocations with $(n-1)$ \charity. 
    
    We relax the notion of EFX slightly and define \dEFX~which requires that the envy from agent $i$ to agent $j$ is removed upon the transfer of any chore from the $i$'s bundle to $j$'s bundle. We give a polynomial-time algorithm that in the chores case for $3$ agents returns an allocation which is either proportional or \dEFX. Note that proportionality is a very strong criterion in the case of indivisible items, and hence both notions we guarantee are desirable. 
\end{abstract}

\section{Introduction}
\label{intro}
Fair division of a set of indivisible items among agents is a fundamental area with applications in various multi-agent settings. The items can be either goods (provides positive utility) or chores (provides negative utility). The case of goods has been vastly studied \cite{survey1}. On the other hand, the case of chores is relatively new. In both settings, given a set $N = [n]$ of $n$ agents and a set $M = [m]$ of $m$ items, the goal is to find an allocation $X = \langle X_1, \ldots, X_n \rangle$ satisfying some fairness and efficiency criteria where agent $i$ receives the bundle $X_i$ for all $i \in [n]$. 

In this paper, we focus on fair division of chores when each agent $i$ has a disutility function 
$d_i: 2^M \rightarrow \mathbb{R}_{\geq 0}$ which indicates how much agent $i$ dislikes each subset $S \subseteq M$ of the chores. We assume that each $d_i$ is additive, i.e., $d_i(S) = \sum_{j\in S} d_i(\{j\})$. 

Envy-freeness is one of the most accepted notions of fairness. 
In the chores setting, allocation $X$ is envy-free if for every pair of agents $i$ and $j$, $d_i(X_i) \leq d_i(X_j)$. However, envy-freeness is too strong to be satisfied.\footnote{For example, consider division of one chore among two agents.} 
Hence, to obtain positive results we need to relax the fairness notion. Therefore, we study envy-freeness up to one item (EF1), envy-freeness up to transferring any item (\dEFX) and proportionality as our fairness criteria. 
For efficiency, we consider (fractional) Pareto optimality (fPO).

\subsection{\boldmath EF1 and fPO with Surplus for $n$ Agents}
An allocation $X = \langle X_1, \ldots, X_n \rangle$ is Pareto optimal (PO), if there exists no allocation $Y = \langle Y_1, \ldots, Y_n \rangle$ such that $d_i(Y_i) \leq d_i(X_i)$ for all agents $i$ and for some agent $j$, $d_j(Y_j) < d_j(X_j)$. 
For Pareto optimality, we assume $Y$ is an integral allocation. 
A stronger notion is fractional Pareto optimality (fPO) which allows $Y$ to be a fractional allocation.
In a fractional allocation $y = \langle y_1, \ldots, y_n \rangle$, $y_{i,c}$ is the fraction of chore $c \in [m]$ allocated to agent $i$ with $\sum_{i \in [n]} y_{i,c} = 1$ and $y_i=(y_{i,1}, \ldots, y_{i,m})$ is $i$'s bundle. Then $d_i(y_i) = \sum_{c \in [m]} y_{i,c} \cdot d_i(\{c\})$ is the disutility of agent $i$ in the fractional allocation.

\paragraph{Fractional Pareto optimality (fPO).} Allocation $x$ is fractionally Pareto optimal or fPO, if there exists no fractional allocation $y$ such that $d_i(y_i) \leq d_i(x_i)$ for all $i$ and for some agent $j$, $d_j(y_j) < d_j(x_j)$.

\paragraph{Envy-freeness up to one chore (EF1).} Allocation $X = \langle X_1, \ldots, X_n \rangle$ is EF1 if for all $i,j \in N$, $d_i(X_i) \leq d_i(X_j)$ or there exists a chore $c \in X_i$ such that $d_i(X_i \setminus \{c\}) \leq d_i(X_j)$. 

EF1 is defined for the case of the goods accordingly, with the difference that the good should be removed from the bundle of the envied agent~\cite{budish11}. For both goods and chores settings, EF1 allocations are known to exist, and they can also be computed in polynomial-time~\cite{LiptonMMS04,efonechore}. 
However, the outputs of these algorithms are not guaranteed to be efficient. Satisfying EF1 and PO simultaneously turns out to be a challenging problem. 

In the goods setting under additive valuations, Caragiannis et al. \shortcite{CaragiannisKMP016} proved that any allocation with maximum Nash welfare is EF1 and PO. Later, Barman et al. \shortcite{BKV18} gave a pseudopolynomial-time algorithm for computing an EF1 and PO allocation, which was recently improved to output an EF1 and fPO allocation~\cite{GargM21}. In the case of chores on the other hand, the existence of EF1 and PO allocations is a big open problem. 
Similar results on chores are known for very limited settings of bivalued disutilities \cite{garg2022fair,ebadian2021fairly}, three agents \cite{chores3} and when chores are divided into two types \cite{limitedchores}.

In this paper, we make progress in this line of work by proving that given additive disutilities, there exists an EF1 and fPO allocation with $(n-1)$ \charity. The analouge of \charity~ in the goods setting is charity, which is a well-accepted concept, and it means that some goods might remain unallocated. Caragiannis et al.~\shortcite{CaragiannisGravin19} introduced the notion of EFX with charity. Many follow-up papers proved relaxations of envy-freeness with charity \cite{CKMS21,BergerCFF,EF2X,mahara2021,laszlo}. 
In the chores setting, by ``$k$ \charity'', we mean that all the chores are allocated, and at most, $k$ extra chores are allocated to the agents, and each of these chores is a copy of an original chore. 

One motivation behind defining the concept of \charity~for chores is the lack of progress on the original problem for over half a decade. It is likely that an allocation that is both EF1 and PO might not always exist, and in that case, the concept of \charity\ seems a good alternative. 

Moreover, duplicating chores makes sense for many applications. For instance, consider the task of distributing papers among reviewers. The goal is to have all papers reviewed and also be fair toward the reviewers. To this end, it does not harm if a few papers are reviewed more than needed. Another practical scenario is when the chores are going to be repeated. Consider the case where the same set of chores needs to be done every month. This can happen in households, corporations, etc. In this case, multiplying some chore $c$ for $k$ times means that we already decide which agents should do $c$ in the following $k$ months. Thus, when planning for the next $k$ months, we can remove $c$ from the set of chores that need to be assigned.

Our first main result is formally stated in Theorem \ref{existence}.
\begin{restatable}{theorem}{existence}\label{existence}
    Given additive disutilites, there exists an allocation with at most $(n-1)$ \charity~ which is EF1 and fPO. Moreover, it can be computed in polynomial time.
\end{restatable}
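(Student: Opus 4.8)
The plan is to work inside the Fisher-market framework and get fractional Pareto optimality ``for free'' from a price structure, so that the fairness goal collapses to a purely payment-based condition. Assign to each chore $c$ a payment $p_c>0$ and, for agent $i$, define the minimum-pain-per-buck ratio $\alpha_i=\min_c d_{i,c}/p_c$ and her MPB set as the minimizing chores. Call $X$ an MPB allocation if every chore in $X_i$ lies in $i$'s MPB set; by the first-welfare characterization of fractional optimality, any MPB allocation is fPO, so maintaining the MPB invariant discharges efficiency once and for all. The fairness reduction I would prove first is: if $X$ is MPB with payments $p$ and is price-EF1 — meaning for all $i,j$ there is a chore $c\in X_i$ with $p(X_i)-p_c\le p(X_j)$ — then $X$ is EF1. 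Indeed, tightness on the MPB set gives $d_i(X_i)=\alpha_i\,p(X_i)$, while $d_{i,c}\ge\alpha_i p_c$ for every chore gives $d_i(X_j)\ge\alpha_i\,p(X_j)$, so $d_i(X_i\setminus\{c\})=\alpha_i\big(p(X_i)-p_c\big)\le\alpha_i\,p(X_j)\le d_i(X_j)$. Thus it suffices to build an integral allocation, with at most $n-1$ duplicated chores (the \charity~of the statement, which the paper defines as extra copies of original chores), that is simultaneously MPB and price-EF1.

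Next I would produce a fractional solution with perfect payment balance and sparse support. Take a fractional competitive equilibrium with equal earnings: payments $p$ and a fractional MPB allocation $x$ with $p(x_i)=p(x_j)$ for all $i,j$, which is therefore trivially fractionally price-envy-free and, being MPB, fPO. Standard basic-solution arguments let me assume the consumption graph $G$ — the bipartite graph on agents and chores with an edge whenever $x_{i,c}>0$ — is a forest. Since every agent earns a positive amount and every chore is fully consumed, each of the $n+m$ nodes has degree at least one, so $|E(G)|\le n+m-1$; subtracting the $m$ chores gives $\sum_c(\deg(c)-1)=|E(G)|-m\le n-1$. The quantity $\sum_c(\deg(c)-1)$ is exactly the number of extra copies created if each chore shared by $t$ agents is duplicated into $t$ integral copies, which is the source of the $(n-1)$ \charity~bound.

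I would then round $x$ to an integral allocation using duplication: each shared chore is copied so that every consumer receives a full copy. Every copy is MPB for its recipient — it duplicates a chore she was already consuming — so the MPB invariant, and hence fPO, is preserved, and by the count above at most $n-1$ copies are created. The remaining task is to certify price-EF1 of the rounded allocation, and this is where the main obstacle lies. Rounding every sharer up to a full copy can inflate an agent's payment far beyond one chore's worth of the balanced value $p(x_i)$ — an agent holding tiny fractions of several expensive chores is the dangerous case — so the naive ``full copy to everyone'' rounding need not be price-EF1. The real work is to round along the forest carefully, assigning each shared chore primarily to a single consumer and spending a duplicate only when a genuine price-EF1 violation forces it, so that each agent's payment exceeds $\min_j p(X_j)$ by at most the payment of one chore while the total number of duplicates stays at $n-1$.

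A secondary obstacle is the polynomial running time, since a fractional chores equilibrium is not known to be directly computable by a convex program. Here I would instead run a combinatorial routine in the style of Barman, Krishnamurthy and Vaish~\cite{BKV18} that maintains an integral MPB allocation together with payments, repeatedly transferring an MPB chore from an over-burdened to an under-burdened agent and multiplicatively updating payments to open new MPB edges, introducing a duplicate only to break a deadlocked component. Bounding the number of price updates polynomially — the familiar hard step in such analyses, which the improved fPO argument of~\cite{GargM21} suggests is achievable — and charging each duplicate to the merging of two components, so that at most $n-1$ occur, would complete the argument.
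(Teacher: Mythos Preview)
Your framework coincides with the paper's: work at prices $p$, keep every assigned chore in its owner's MPB set so that fPO is automatic, reduce EF1 to price-EF1, and exploit that the consumption graph can be taken to be a forest. Where your proposal falls short is precisely at the two places you yourself flag as ``the real work'' and ``the familiar hard step'': neither is carried out, and the paper's contribution is exactly to carry them out.

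\textbf{The rounding.} You correctly observe that giving every sharer a full copy can destroy price-EF1 and then say only that one should ``round along the forest carefully''. The paper supplies the missing algorithm. Phase~1 repeatedly (i) integrally assigns any chore of current degree~$1$ to its unique neighbor, and (ii) deletes an edge $\{i,c\}$ whenever $p((Y_i\cup c)\setminus c')>1-\epsilon$ for \emph{every} $c'\in Y_i\cup\{c\}$, redistributing the freed earning to $c$'s other neighbors. A pigeonhole argument using $\epsilon=\frac{1}{5nm}$ shows that no chore ever becomes isolated in this process. Phase~2 then roots each remaining tree at an agent, processes agents in BFS order, gives each non-root agent her parent chore (if still unassigned), greedily adds children chores until her payment reaches $1-\epsilon$, and only if that fails adds a \emph{second copy} of the parent chore. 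The $n-1$ bound comes from ``at most one duplicate per non-root agent''; note that your count $\sum_c(\deg(c)-1)\le n-1$ is for the full-duplication rounding you immediately reject, so it does not by itself bound the number of duplicates your unspecified careful rounding would use.

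\textbf{Polynomial time.} Your plan to abandon the equilibrium and run a BKV-style price-rising procedure, charging duplicates to component merges, is speculative and quite different from what the paper does. The paper's solution is cleaner: it never needs an exact CEEI. It computes a $(1-\epsilon)$-CEEI with $\epsilon=\frac{1}{5nm}$ in polynomial time (via the combinatorial algorithm of Chaudhury et al.), makes its payment graph acyclic, and then runs the two-phase rounding above. The choice $\epsilon=\frac{1}{5nm}$ is exactly what makes the pigeonhole bounds in Phase~1 go through (e.g., $\frac{m-1}{m}(1+4n\epsilon)\le 1-\epsilon$), so the approximation slack is absorbed into the pEF1 analysis rather than eliminated by a separate price-update loop.
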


Note that the allocation in Theorem \ref{existence} being fPO means that it fractionally Pareto dominates all the allocations with the same surplus.
Our approach is based on rounding of competitive equilibrium with equal incomes (CEEI). Since there is no polynomial-time algorithm known for computing a CEEI, we round a $(1-\epsilon)$-approximate-CEEI for $\epsilon=\fact$, which can be computed in polynomial-time~\cite{combinatorial}. By integrally assigning chores which are fractionally allocated in the $(1-\epsilon)$-CEEI, we guarantee that the final allocation is fPO. However, the main challenge here is to achieve EF1 guarantee with at most $n-1$ \charity, which requires careful rounding. 

\subsection{\dEFX~or Proportionality for 3 Agents}
The discrepancy between known results for the goods and chores setting carries over even for instances with a small number of agents. In the goods setting, EFX allocations always exist for $3$ agents with additive utilities \cite{ChaudhuryGM20}. However, the analogous problem for chores is open. An allocation $X = \langle X_1, \ldots, X_n \rangle$ is EFX if for all agents $i$ and $j$ and all chores $c \in X_i$, $d_i(X_i \setminus \{c\}) \leq d_i(X_j)$. The existence of EFX allocations for chores has been studied in the very limited settings of $3$ agents with bivalued disutilites \cite{bivalued3} and also when agents have the same ordinal preferences on the chores \cite{propx}.  

Let us briefly discuss the technique to obtain EFX for three agents for the goods setting and why it fails in the chores setting. In \cite{ChaudhuryGM20}, the high-level idea is to start with an empty allocation and at each step, allocate some unallocated goods to some agents, possibly take away some goods from them or move the bundles among the agents while guaranteeing that the partial allocation is EFX at the end of each step. Basically, the algorithm moves in the space of partial EFX allocations, improving a sophisticated potential function at each step and terminates when it reaches a complete allocation. This algorithm relies on involved concepts such as \emph{champion-graphs} and \emph{half-bundles}. In the goods setting, by allocating more goods, we make progress in the sense of improving agents' utilities. However, in the chores setting, by allocating more chores, we make the agents less happy. Therefore, it is not easy to adapt the algorithm and come up with a potential function which improves after more chores get allocated. In fact, the existence of allocations satisfying even weaker notions of fairness than EFX like \dEFX~is open for the chores setting even when $n=3$. 
Yin and Mehta \shortcite{tEFX3} proved the existence of a \dEFX~allocation for three agents if two of them have additive disutility functions and the ratio of their highest to lowest cost is bounded by two.
\paragraph{Envy-freeness up to transferring any chore (\dEFX).} An allocation is \dEFX~if no agent $i$ envies another agent $j$ after transferring any chore from $i$'s bundle to $j$'s bundle. Formally, allocation $X$ is \dEFX~if for all agents $i$ and $j$ and any chore $c \in X_i$, $d_i(X_i \setminus \{c\}) \leq d_i(X_j \cup \{c\})$. We note that given additive utility/disutility functions, \dEFX~ is stronger than EF2X studied in \cite{EF2X}. EF2X guarantees that any envy is removed upon the removal any two items from the envied/envious bundle.

Recently, Akrami et al. \shortcite{simple} gave an alternative proof for the existence of EFX allocations for three agents in the goods setting which overcomes the mentioned barrier. We use similar techniques, and instead of moving in the space of partial fair allocations and terminating when reaching a complete allocation, we move in the space of complete allocations and stop when we reach a fair allocation. Our technique resembles the cut-and-choose protocol used for fairly allocating items among two agents. In cut and choose, whether the resource is divisible or indivisible, one agent divides it into two parts so that she finds both parts fair. Then the second agent chooses her favorite part and the remaining part goes to the first agent. A similar idea for the case of three agents would be to find a partition $(X_1, X_2, X_3)$ such that agent $1$ finds $X_1$ and $X_2$ fair and agent $2$ finds $X_2$ and $X_3$ fair. This way the third agent can choose her favorite bundle and the remaining bundles can be fairly allocated to the two remaining agents.

An allocation $X$ is proportional if for every agent $i$, $d_i(X_i) \leq d_i(M)/n$. Note that proportionality is too strong to be satisfied when chores are indivisible.\footnote{Again consider the counter example of two agents and one chore.} We show that given any instance comprising of three agents with additive disutilities, in polynomial time one can find an allocation that is either proportional or \dEFX; the choice of alternative is made by the algorithm. Note that the EFX result for $3$ agents in the goods setting is existential and although the approach is constructive, the algorithm is not polynomial. Our second main result is stated in Theorem \ref{EFXmainthmintro2}.
\begin{restatable}{theorem}{EFXmainthmintro2}\label{EFXmainthmintro2}
Given an instance comprising of three agents with additive disutilities, and a set of indivisible chores, there exists an allocation $X$, such that for all $i \in [3]$
\begin{itemize}
    \item either $d_i(X_i) \leq 1/3 \cdot d_i(M)$, or 
    \item for all $c \in X_i$ and $j \in [3]$, we have $d_i(X_i \setminus \{c\}) \leq d_i(X_j \cup \{c\})$.
\end{itemize}
Furthermore, such an allocation can be determined in polynomial time.
\end{restatable}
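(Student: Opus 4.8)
\emph{A cut-and-choose reduction.} I would first show the theorem follows once we produce a partition $\mathcal P=(X_1,X_2,X_3)$ of $M$ such that (i) agent $1$ is \emph{good} for $X_1$ and for $X_2$, and (ii) agent $2$ is \emph{good} for $X_2$ and for $X_3$, where ``agent $i$ is good for a part $B$ of $\mathcal P$'' means $d_i(B)\le d_i(M)/3$ or $d_i(B\setminus\{c\})\le d_i(B'\cup\{c\})$ for every $c\in B$ and every other part $B'$ of $\mathcal P$ --- a condition on $\mathcal P$ alone, independent of who receives which part. Indeed, given such a $\mathcal P$, let agent $3$ take a part $X_t$ minimizing $d_3$; then $d_3(X_t)=\min_k d_3(X_k)\le\frac13\sum_k d_3(X_k)=\frac13 d_3(M)$, so agent $3$ is proportional. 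Assign the remaining two parts to agents $1$ and $2$ thus: if $t=3$, give $X_1$ to agent $1$ and $X_2$ to agent $2$; if $t=2$, give $X_1$ to agent $1$ and $X_3$ to agent $2$; if $t=1$, give $X_2$ to agent $1$ and $X_3$ to agent $2$. In every case agents $1$ and $2$ hold parts they are good for, so each of the three agents is either proportional or satisfies the $\dEFX$ inequality against both other parts --- which is exactly the theorem's conclusion.

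\emph{Securing (i) for free.} I would maintain the invariant that $\mathcal P$ is $\dEFX$-feasible \emph{from agent $1$'s viewpoint}: $d_1(P_k\setminus\{c\})\le d_1(P_l\cup\{c\})$ for all parts $P_k,P_l$ of $\mathcal P$ and all $c\in P_k$. Under this invariant agent $1$ is good for every part, so (i) holds no matter how the parts end up labeled. Such partitions exist --- for instance a leximin-local-minimum of agent $1$'s disutility profile, modulo a harmless tie-break on zero-disutility chores --- and one reaches one from an arbitrary partition by repeatedly transferring a chore $c$ from a part $P_k$ to a part $P_l$ whenever $d_1(P_k\setminus\{c\})>d_1(P_l\cup\{c\})$: such a transfer replaces $\{d_1(P_k),d_1(P_l)\}$ by two values strictly below $\max(d_1(P_k),d_1(P_l))$, strictly lowering agent $1$'s sorted disutility profile.

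\emph{Driving towards (ii).} Given a $\mathcal P$ that is $\dEFX$-feasible for agent $1$, write $F,P,W$ for its parts in $d_2$-increasing order. Since $d_2(F)\le\frac13 d_2(M)$, agent $2$ is good for $F$, so it suffices that agent $2$ also be good for $P$: then set $\{X_2,X_3\}=\{F,P\}$ and let $X_1=W$, and stop. Otherwise $P$ fails the $\dEFX$ test, so there are $c\in P$ and a part $B\neq P$ with $d_2(P\setminus\{c\})>d_2(B\cup\{c\})$; since $d_2(B)\le d_2(B\cup\{c\})<d_2(P\setminus\{c\})\le d_2(P)$, necessarily $B=F$, so transfer $c$ from $P$ to $F$. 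This strictly lowers agent $2$'s sorted disutility profile (both altered parts acquire $d_2<d_2(P)$ while $W$ is untouched). The transfer may destroy agent $1$'s $\dEFX$-feasibility, so restore it via agent-$1$ transfers as above and iterate.

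\emph{The main obstacle.} The heart of the proof --- where it parallels the delicate core of the $3$-agent EFX arguments for goods~\cite{simple,ChaudhuryGM20} --- is to show that this interleaving of agent-$1$ repairs and agent-$2$ improvements terminates, and in polynomially many steps. Neither agent's sorted disutility profile is a valid standalone potential: an agent-$2$ transfer can raise agent $1$'s profile (it loads a chore onto $F$), and an agent-$1$ repair can raise agent $2$'s profile (it may load a chore onto $F$, or onto agent $2$'s heaviest part). One therefore needs a single potential that strictly decreases on \emph{every} move --- I would expect it to be obtained by restricting which $\dEFX$-violation agent $1$ is permitted to repair, and/or by invoking the proportionality alternative to absorb the awkward configurations --- together with a bound on the number of its attainable values (e.g.\ after a polynomial-precision discretization of the disutilities) to yield polynomial running time. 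The reduction and the two families of improving moves are routine scaffolding; designing this potential and handling the case analysis of how a repair transfer interacts with agent $2$'s $d_2$-ordering is the real work.
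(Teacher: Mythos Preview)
Your cut-and-choose reduction is correct and coincides with the paper's (with the roles of agents $2$ and $3$ swapped: in the paper, agent $2$ picks her favourite bundle at the end, and invariants are maintained for agents $1$ and $3$). The gap is exactly where you locate it: you give no termination argument, and your speculation about how one might go does not match what actually works.

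The paper does \emph{not} try to maintain \dEFX-feasibility for agent~$1$ on all three parts, and does not look for a clever potential on your symmetric process. It breaks the symmetry from the outset. It keeps only that $X_1,X_2$ are \dEFX-feasible for agent~$1$ (Invariant~\ref{inv1}), that $X_1,X_2$ satisfy the one-sided condition $X_i\setminus c\le_1 X_3$ toward $X_3$ (Invariant~\ref{inv2}), and that $X_3$ is \dEFX-feasible for agent~$3$ (Invariant~\ref{inv3}). The bundle $X_3$ is a designated sink that only grows: each iteration moves a single chore (the $d_3$-minimum chore of $X_1$) from $X_1$ into $X_3$, and the only repair permitted is shuffling chores from $X_2$ to $X_1$ in $d_1$-increasing order until the stopping rule $Y_1<_1 Y_2\le_1 Y_1\cup c'_{\ell+1}$ holds. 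Because every repair stays inside $X_1\cup X_2$, the potential is simply $\Phi=|X_1|+|X_2|$, which drops by exactly one per iteration; termination in at most $m$ rounds is immediate and polynomial time is automatic. The proportionality alternative is not used to ``absorb awkward configurations'' along the way but as a one-shot escape: if after rebalancing $Y_3$ fails to be \dEFX-feasible for agent~$3$, the paper shows directly that $Y_1,Y_2$ are proportional for agent~$1$ and $Y_2,Y_3$ are proportional for agent~$3$, so agent~$2$ takes her favourite and the algorithm halts.

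The point is that your stronger invariant (agent~$1$ \dEFX-feasible for all three parts) is precisely what manufactures the termination difficulty you describe; weakening it to two parts plus a one-sided bound toward a monotone sink makes the potential trivial and the interleaving problem disappear.
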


We remark that although our result does not fully settle the existence of \dEFX~allocations in the chores setting, the guarantees in Theorem~\ref{EFXmainthmintro2} are indeed desirable, especially given that no relaxation of envy-freeness other than EF1, is currently known to exist in the chores setting. Proportionality is a very desirable property of an allocation and is often unattainable in the discrete setting. In fact, the discrete fair division protocol used in Spliddit\footnote{spliddit.org}, 
prior to the Nash-welfare maximization algorithm in 2015\footnote{This is elaborated in Introduction of~\cite{CaragiannisKMP016}.}, first checks for a proportional allocation and only if proportional allocations are unattainable, it attempts at finding relaxations of envy-freeness. There is also research in discrete fair division that attempts to give as many agents their proportional share~\cite{FeigeN22}, whilst satisfying certain relaxations of  classical fairness notions.

\subsection{Further Related Work}
The notion of CEEI has a long history dating back to classical theories in microeconomics~\cite{FisherThesis}. When agents have linear utilities, CEEI with goods is known to be convex, and the equilibrium prices are unique~\cite{EisenbergG59}. Such properties have facilitated the formulation of several polynomial time algorithms~\cite{DPSY08,Orlin10}. In contrast, CEEI with chores forms a non-convex disconnected set~\cite{BogomolnaiaMSY17} and admits several equilibrium prices. Branzei and Sandomirskiy~\shortcite{comp-chores} give a polynomial-time algorithm when the number of agents or the number of goods is constant, which was later improved in~\cite{GargM20,GargHMS21} to the case of mixed manna containing both goods and chores. Later, Chaudhury et al.~\cite{ChaudhuryGMM21} gave a complementary pivot algorithm for finding a CEEI for the case of mixed manna, which runs fast in practice and is provably polynomial-time when the number of agents or the number of items (goods and chores) is constant.  Recently, Boodaghians et al.~\shortcite{BoodaghiansCM22} and Chaudhury et al.~\shortcite{ChaudhuryGMM22} have given polynomial time algorithms for computing $(1-\varepsilon)$-CEEI. However, the complexity of finding an exact CEEI in the chores setting is open. Moreover, Fisher markets that admit integral equilibria is studied in \cite{BarmanKrishna19}.
\section{Preliminaries}
\label{prelim}
An instance of discrete fair division with chores is given by the tuple $\langle N, M, \mathcal {D} \rangle$, where $N=[n]$ is the set of $n$ agents, $M=[m]$ is the set of $m$ indivisible chores and $\mathcal D = (d_1(\cdot), \dots, d_n(\cdot))$, where each $d_i: 2^M \rightarrow \mathbb{R}_{\geq 0}$ is the disutility function of agent $i$. For all agents $i$, $d_i$ is assumed to be \emph{normalized}, i.e., $d_i(\emptyset)=0$ and \emph{monotone}, i.e.,  $d_i(S \cup \{c\}) \geq d_i(S)$ for all $S \subseteq M$ and $c \notin S$. A function $f: 2^M \rightarrow \mathbb{R}_{\geq 0}$ is said to be \emph{additive} if $f(S) = \sum_{s \in S} f(\{s\})$ for all $S \subseteq M$. For ease of notation, we use $c$ instead of $\{c\}$. For $\oplus \in \left\{\leq, \geq ,< , > \right\}$, we use $S \oplus_i T$  for $d_i(S) \oplus d_i(T)$. %The missing proofs of this section can be found in Appendix \ref{prelim-appendix}.

\paragraph{Fisher market.}
In the Fisher market setting for chores in addition to a set $N$ of agents, a set $M$ of chores and a disutility profile $\mathcal{D}$, each agent $i$ has an initial liability $\ell_i > 0$ which specifies how much money this agent should earn in the market. We denote the fisher market instance by $F = \langle N, M, \mathcal{D}, \mathcal{L} \rangle$ where $\mathcal{L} = (\ell_1, \ldots, \ell_n)$. Given the instance $F$, the market outcome is a pair of fractional allocation and payment vector $\langle x, p \rangle$. For all agents $i$ and chores $c$, $x_{i,c}$ denotes what fraction of $c$ is assigned to $i$ and $p_c$ denotes the price of chore $c$. The income of agent $i$ from market outcome $\langle x, p \rangle$ is $p(x_i) = \sum_{c \in M} x_{i, c} p_c$. We can also treat integral bundles as vectors with $0$ and $1$ entries. Given payment vector $p$, the pain per buck of agent $i$ for chore $c$ is $d_i(c) / p_c$. We denote the minimum pain per buck of agent $i$ at payment $p$ by $\MPB_i$, i.e., $\MPB_i = \min_{c \in M} d_i(c) / p_c$. 
\begin{definition}
    Given a Fisher market instance $F$, a market outcome $\langle x, p \rangle$ is a Fisher market equilibrium if 
    \begin{itemize}
        \item the market clears, i.e., for all chores $c \in M$, $\sum_{i \in [n]} x_{i,c} = 1$, and
        \item for all agents $i$, $\sum_{c \in M} x_{i,c} \cdot p_c = \ell_i$, and
        \item all agents only receive chores with minimum pain per buck, i.e., for all agents $i$ and chores $c$, if $x_{i,c} > 0$, then $d_i(c) / p_c = \MPB_i$.
    \end{itemize}
\end{definition}

If for all agents $i$, $\ell_i = 1$, then a Fisher equilibrium is called competitive equilibrium with equal incomes or CEEI. Bogomolnaia et al. \shortcite{BogomolnaiaMSY17} proved that a CEEI always exists when agents have linear disutilities. 

For goods, any Fisher equilibrium is fPO \cite{microeconomics-book}. The same holds true for chores as essentially the same argument shows.
\begin{restatable}%[\cite{microeconomics-book}]
{proposition}{poprop}\label{PO-prop}
    Given additive disutilities, any Fisher equilibrium is fractionally Pareto Optimal. 
\end{restatable}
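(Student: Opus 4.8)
The plan is to run the classical ``first welfare theorem'' argument, transported to the chores setting through the minimum-pain-per-buck ($\MPB$) characterization of equilibrium. Suppose, toward a contradiction, that $\langle x, p\rangle$ is a Fisher equilibrium but there is a fractional allocation $y$ with $d_i(y_i) \le d_i(x_i)$ for every agent $i$ and $d_j(y_j) < d_j(x_j)$ for some agent $j$. I will derive a contradiction by showing that $y$ cannot allocate all the chores while respecting these disutility bounds, because it would have to cost strictly less than $x$ at the equilibrium prices~$p$.

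The steps, in order. First, by the third equilibrium condition every chore $c$ with $x_{i,c}>0$ satisfies $d_i(c) = \MPB_i\, p_c$; summing over $c$ and using the budget condition $p(x_i)=\ell_i$ gives $d_i(x_i) = \MPB_i\, p(x_i) = \MPB_i\, \ell_i$ for every $i$. Second, for an \emph{arbitrary} allocation the inequality $d_i(c) \ge \MPB_i\, p_c$ holds for every chore $c$ (this is just the definition of $\MPB_i$ as a minimum over $c$), so by additivity $d_i(y_i) = \sum_{c} y_{i,c}\, d_i(c) \ge \MPB_i\, p(y_i)$. Third, combining the two facts with Pareto domination, for every $i$ we get $\MPB_i\, p(y_i) \le d_i(y_i) \le d_i(x_i) = \MPB_i\, \ell_i$, hence $p(y_i) \le \ell_i$, and for the distinguished agent $j$ the strict disutility inequality upgrades this to $p(y_j) < \ell_j$. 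Fourth, summing over all agents, $\sum_i p(y_i) < \sum_i \ell_i$. But $\sum_i p(y_i) = \sum_{c} p_c \sum_i y_{i,c} = \sum_c p_c$ since $y$ is an allocation ($\sum_i y_{i,c}=1$), and likewise $\sum_i \ell_i = \sum_i p(x_i) = \sum_c p_c$ since $x$ is an equilibrium. This yields $\sum_c p_c < \sum_c p_c$, the desired contradiction.

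The argument is short, so there is no substantial ``hard part''; the one place that needs a word of care is the passage from $\MPB_i\, p(y_i) \le \MPB_i\, \ell_i$ to $p(y_i) \le \ell_i$, which uses $\MPB_i > 0$ (and implicitly $p_c>0$). This holds under the standing assumption that every chore has strictly positive disutility for every agent; I would state this caveat explicitly, noting that an agent with $\MPB_i=0$ would in any case have $d_i(x_i)=0$ and hence could never be the strictly-improved agent. As the excerpt already remarks, this is essentially the same proof as for goods, with ``bang per buck'' and maximization replaced by ``pain per buck'' and minimization.
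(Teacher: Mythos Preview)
Your proof is correct and is essentially the same first-welfare-theorem argument as the paper's. The paper compresses it into a single chain of (in)equalities showing $\sum_i d_i(x_i)/\MPB_i \le \sum_i d_i(y_i)/\MPB_i$, while you unpack the same steps through the intermediate comparison $p(y_i)\le p(x_i)=\ell_i$; the content is identical, including the implicit reliance on $\MPB_i>0$, which you rightly flag.
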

\begin{proof}
    Let $\langle x,p \rangle$ be a Fisher equilibrium and let $y$ be any other allocation. Then
    \begin{align*}
        \sum_{i \in [n]} \frac{d_i(x_i)}{\MPB_i} &= \sum_{i \in [n]} \sum_{c \in M} x_{i,c} p_c \\
        &= \sum_{c \in M} p_c \\
        &= \sum_{i \in [n]} \sum_{c \in M} y_{i,c} p_c \\
        &\leq \sum_{i \in [n]} \sum_{c \in M} \frac{y_{i,c} d_i(c)}{\MPB_i} \\
        &= \sum_{i \in [n]} \frac{d_i(y_i)}{\MPB_i}.
    \end{align*}
    Hence, it cannot be the case that $d_i(y_i) \leq d_i(x_i)$ for all $i \in [n]$ with one strict inequality.
\end{proof}
Given a market $\langle x, p \rangle$, the \payment~graph of $x$ is a weighted bipartite (undirected) graph with one part consisting of nodes corresponding to the $n$ agents and one part consisting of nodes corresponding to the $m$ chores. We denote the \payment~graph of $x$ by $G_{\langle x, p \rangle}$. There is an edge between agent $i$ and chores $c$, if and only if $x_{i,c} > 0$. For any edge $\{i,c\}$ in $G_{\langle x, p \rangle}$, the weight of $\{i,c\}$ is $e_{i,c} = x_{i,c} \cdot p_c$ which is the earning of agent $i$ from chore $c$ in this market. For any graph $G$, we denote the set of edges of $G$ by $E(G)$.

There is no known polynomial time algorithm for computing a CEEI. However, Boodaghians et al. \shortcite{BoodaghiansCM22} gave an exterior point algorithm to compute a $(1-\epsilon)$-CEEI in polynomial time. The running time was improved by a combinatorial algorithm in \cite{combinatorial}. Namely, a $(1 - \epsilon)$-CEEI can be computed in time polynomial in the size of the input and $\frac{1}{\epsilon}$. In a $(1 - \epsilon)$-CEEI, the income of each agent is between $1-\epsilon$ and $1 + \epsilon$. We formally define $(1 - \epsilon)$-CEEI below. %Definition \ref{approx-ceei}.

\begin{definition}\label{approx-ceei}
    Given a Fisher market $F$, a market outcome $\langle x, p \rangle$ is a $(1 - \epsilon)$-CEEI, for an $\epsilon \in [0, 1]$, if
    \begin{itemize}
    \item the market clears, i.e., for all chores $c \in M$, $\sum_{i \in [n]} x_{i,c} = 1$, and
    \item for all agents $i$, $1-\epsilon \leq \sum_{c \in M} x_{i,c} \cdot p_c \leq 1+\epsilon$, and
    \item all agents only receive chores with minimum pain per buck, i.e., for all agents $i$ and chores $c$, if $x_{i,c} > 0$, then $d_i(c) / p_c = \MPB_i$.
\end{itemize}
\end{definition}

Similar to envy-freeness and its relaxations, we can define payment envy-freeness and its relaxations. In particular, given a payment vector $p=(p_1, \ldots, p_m)$ for the chores, an integral allocation $X$ is \emph{payment envy-free up to one chore} or pEF1, if for all agents $i$ and $j$, either $X_i = \emptyset$ or there exists a chore $c \in X_i$ such that $p(X_i \setminus c) \leq p(X_j)$. 

\begin{restatable}
[Lemma 3.5 in Ebadian \emph{et al.}, 2022]
{proposition}{pefone}\label{pef1-prop}
    If an integral allocation $X$ is pEF1 with respect to payment vector $p$ and $\langle X, p \rangle$ is a Fisher equilibrium, then $X$ is EF1.
\end{restatable}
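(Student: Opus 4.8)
The plan is to exploit the defining property of a Fisher equilibrium, namely that every chore an agent owns lies on that agent's minimum-pain-per-buck set, while every chore (owned or not) has pain per buck at least $\MPB_i$. This dictionary between payments and disutilities is all that is needed: for owned chores disutility equals $\MPB_i$ times payment, and for arbitrary chores disutility is at least $\MPB_i$ times payment.

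First I would fix a pair of agents $i,j$. If $X_i=\emptyset$, then $d_i(X_i)=0\le d_i(X_j)$ and EF1 holds for this pair trivially, so assume $X_i\neq\emptyset$. Since $X$ is pEF1 with respect to $p$, there is a chore $c\in X_i$ with $p(X_i\setminus c)\le p(X_j)$, and I claim this same $c$ witnesses EF1 for $i$ towards $j$. For every $c'\in X_i$ we have $x_{i,c'}=1>0$, so the third equilibrium condition gives $d_i(c')=\MPB_i\cdot p_{c'}$; summing over $c'\in X_i\setminus c$ and using additivity yields $d_i(X_i\setminus c)=\MPB_i\cdot p(X_i\setminus c)$. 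On the other hand, by definition of $\MPB_i$ as the minimum pain per buck, every chore $c'$ satisfies $d_i(c')\ge \MPB_i\cdot p_{c'}$ (prices being positive in a chores market), hence $d_i(X_j)\ge \MPB_i\cdot p(X_j)$. Combining these with the pEF1 inequality and $\MPB_i\ge 0$ gives $d_i(X_i\setminus c)=\MPB_i\cdot p(X_i\setminus c)\le \MPB_i\cdot p(X_j)\le d_i(X_j)$, which is exactly the EF1 condition for the pair $(i,j)$. Since $i,j$ were arbitrary, $X$ is EF1.

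There is essentially no obstacle here; the only point requiring a word of care is the degenerate case of a zero-priced chore, which does not arise in the chores Fisher market because each agent must earn a strictly positive liability, and in any event a zero-priced chore contributes zero to both sides of the final inequality. Everything else is a single $\MPB$ substitution together with additivity, which is why the statement can safely be invoked as a known lemma from Ebadian \emph{et al.}
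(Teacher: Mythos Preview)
Your proof is correct and follows essentially the same route as the paper's: both convert disutilities to payments via the equilibrium condition $d_i(c')=\MPB_i\cdot p_{c'}$ on $X_i$, apply the pEF1 inequality, and then convert back using $d_i(c')\ge \MPB_i\cdot p_{c'}$ for chores in $X_j$. The only cosmetic differences are that you spell out the trivial case $X_i=\emptyset$ and add a remark about zero-priced chores, neither of which affects the argument.
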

\begin{proof}
    Consider any two agents $i,j$ such that $X_i \neq \emptyset$. Let $c \in X_i$ be such that $p(X_i \setminus c) \leq p(X_j)$. We have
    \begin{align*}
        X_i \setminus c &=_i \MPB_i \cdot p(X_i \setminus c) &\tag{$\langle X, p\rangle$ is a Fisher equilibrium} \\
        &\leq \MPB_i \cdot p(X_j) &\tag{$X$ is pEF1} \\
        &= \MPB_i \cdot \sum_{c \in X_j} \frac{p(c)}{d_i(c)} \cdot d_i(c) \\
        &\leq_i X_j. &\tag{$\MPB_i = \mathit{min}_{c \in M} d_i(c)/p(c)$}
    \end{align*}
\end{proof}

\section{EF1 + \lowercase{f}PO + Surplus} \label{ef1-sec}

In this section, we prove that after introducing at most $n-1$ chores, an allocation exists which is EF1 and fPO at the same time. Each of these new chores is a copy of an existing chore. Moreover, we compute such an allocation in polynomial time. The high-level idea is to first consider a fractional allocation $x$ which admits a $(1-\epsilon)$-CEEI for $\epsilon = \fact$. Then to each agent, we fully allocate some of the chores that are fractionally allocated to her in $x$. This way, each agent only receives her MPB chores and therefore the allocation is fPO. Furthermore, we guarantee that each agent earns at least $1-\epsilon$ amount of money and there exists a chore that upon its removal, the earned money drops below $1-\epsilon$. This way, we can also guarantee EF1 property for the allocation. In order to achieve such an allocation, we allocate some chores to multiple agents and hence we need multiple copies of some of the chores. However, we prove that the number of required copies does not exceed $n-1$. Basically, our algorithm introduces at most $n-1$ copies of the existing chores and finds an integral Fisher equilibrium where each agent earns $1-\epsilon$ amount of money up to one chore. 

\begin{restatable}{lemma}{acyclic}\label{acyclic-prop}
    Given any Fisher equilibrium $\langle x, p \rangle$ for a Fisher market $F$, there exists a polynomial time algorithm $\mathtt{makeAcyclic}(x,p)$ that computes allocation $y$ such that $\langle y, p \rangle$ is a Fisher equilibrium for $F$ and $G_{\langle y, p \rangle}$ is acyclic. 
\end{restatable}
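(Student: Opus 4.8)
The plan is a standard circulation argument: as long as the payment graph of the current equilibrium contains a cycle, I push payment around that cycle until one edge drops out of the support, which strictly decreases the number of edges without destroying the equilibrium. Starting from $\langle x, p\rangle$, let $G \assign G_{\langle x,p\rangle}$ and record the edge weights $e_{i,c} = x_{i,c} p_c$. If $G$ is acyclic we output $x$. Otherwise, since $G$ is bipartite it contains an even cycle $C = (i_1, c_1, i_2, c_2, \ldots, i_k, c_k, i_1)$ with the $i_t$ agent-vertices and the $c_t$ chore-vertices; such a cycle (if one exists) is found in polynomial time by a single DFS.

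Next I would perturb the weights along $C$ only. For a parameter $\delta \geq 0$, replace $e_{i_t, c_t}$ by $e_{i_t, c_t} + \delta$ and $e_{i_{t+1}, c_t}$ by $e_{i_{t+1}, c_t} - \delta$ for all $t$ (indices mod $k$, so $i_{k+1} = i_1$), leave every weight outside $C$ untouched, and define the new fractional allocation by $y_{i,c} \assign e'_{i,c}/p_c$ where $e'$ is the perturbed weight vector. Every agent-vertex and every chore-vertex on $C$ is incident to exactly one "$+\delta$" edge and one "$-\delta$" edge of $C$, so the total incident weight at each vertex is unchanged. Consequently $\sum_{c} y_{i,c} p_c = \ell_i$ for every agent $i$ (liabilities still met) and $\sum_{i} y_{i,c} p_c = p_c$, i.e.\ $\sum_i y_{i,c} = 1$ for every chore $c$ (market still clears); and all $y_{i,c} \geq 0$ provided we keep $\delta$ small, with each $y_{i,c} \leq 1$ automatically. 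Since prices are not modified, each $\MPB_i$ is the same as before, and $y_{i,c} > 0$ only on edges of $G$, which were MPB-tight for $\langle x, p\rangle$; hence $\langle y, p\rangle$ is again a Fisher equilibrium for $F$. Now take $\delta$ maximal subject to nonnegativity, namely $\delta = \min_t e_{i_{t+1}, c_t}$ (the minimum over the "$-\delta$" edges, which is strictly positive as those edges lie in $G$). At this $\delta$ at least one cycle edge reaches weight $0$ and leaves the support, no edge outside $C$ changes, and no new edge is created, so the payment graph of $y$ has strictly fewer edges than that of $x$.

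Finally I would iterate: set $x \leftarrow y$ and repeat. The payment graph is a simple bipartite graph, hence has at most $nm$ edges initially, and each iteration deletes at least one edge, so after at most $nm$ iterations there are no cycles left, i.e.\ $G_{\langle y,p\rangle}$ is a forest; each iteration (DFS for a cycle, computing $\delta$, updating a linear number of weights) is polynomial, so $\mathtt{makeAcyclic}(x,p)$ runs in polynomial time. The only thing needing care is verifying that a single cycle cancellation preserves all three equilibrium conditions, but as sketched above this is pure bookkeeping—the $+\delta/-\delta$ pattern balances at every vertex of the cycle and prices never move—so I do not anticipate a real obstacle; the argument is entirely elementary.
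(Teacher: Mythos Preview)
Your proposal is correct and takes essentially the same approach as the paper: both argue by cycle cancellation in the payment graph, pushing a fixed amount of payment around an even cycle so that one edge drops out while all three equilibrium conditions are preserved, and then iterate until the graph is acyclic. The only cosmetic difference is that you take $\delta$ to be the minimum weight over the ``$-\delta$'' edges, whereas the paper orients the cycle so that the globally minimum-weight edge is on the decreasing side; these are equivalent choices.
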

\begin{proof}
    We define $\mathtt{makeAcyclic}(x,p)$ as following.
    If $G_{\langle x, p \rangle}$ is acyclic then return $\langle x, p \rangle$. Otherwise, as long as $G_{\langle x, p \rangle}$ has a cycle do the following. Let $C = (a_1, c_1 \ldots a_t, c_t, a_1)$ be a cycle in $G_{\langle x, p \rangle}$ where $a_i$ corresponds to the agent nodes and $c_i$ to the chore nodes. Let us denote the earning of agent $i$ from chore $c$ in allocation $x$ as $e^x_{i,c} = x_{i,c} \cdot p_{c}$. Without loss of generality, assume $e^x_{a_1, c_1}$ is minimum among all $e^x_{i,j}$ where $(i,j)$ is an edge in $C$. Now consider the allocation $y$ where for all $i \in [t]$ $e^y_{a_i, c_i} = e^x_{a_i, c_i} - e$, $e^y_{a_i, c_{i-1}} = e^x_{a_i, c_{i-1}} + e$. For all other pairs of $(i,j)$, $e^y_{i,j} = e^x_{i,j}$. In the end of each iteration of detecting a cycle and computing $y$, set $x \leftarrow y$.

    First we prove $\mathtt{makeAcyclic}(x,p)$ terminates in polynomial time. Let $x$ be the allocation in the beginning of each iteration of detecting a cycle and $y$ be the allocation in the end of the iteration. 
    Note that $E(G_{\langle y, p \rangle}) \subsetneq E(G_{\langle x, p \rangle})$ since the edge $(a_1, c_1)$ exists in $G_{\langle x, p \rangle}$ but not in $G_{\langle y, p \rangle}$. Since at each step the number of edges decreases and each step terminates in polynomial time, the procedure terminates in polynomial time and in the end $G_{\langle y, p \rangle}$ is acyclic.
    
    Now we prove the final $\langle y, p \rangle$ is a Fisher equilibrium by induction.
    Note that in the beginning $\langle x, p \rangle$ is a Fisher equilibrium. Now assuming $\langle x, p \rangle$ is a Fisher equilibrium in the beginning of an iteration of removing an edge, we prove in the end of that iteration $\langle y, p \rangle$ is a Fisher equilibrium too. 
    Note that for each chore $c$, $\sum_{i \in [n]} e^y_{i,c} = e^x_{i,c} = p_c$. Thus, all chores are fully allocated in $y$. Also, for each agents $i$, $\sum_{c \in M} e^y_{i,c} = \sum_{c \in M} e^x_{i,c} = \ell_i$. Moreover, for all agents $i$ and chore $c$, if $y_{i,c} > 0$, then $x_{i,c} > 0$. Therefore, in $y$ like in $x$ agents only receive chores with MPB. This means that $\langle y, p \rangle$ is also a Fisher equilibrium. 
\end{proof}
Now we explain Algorithm \ref{existence-alg}.
Given instance $\mathcal{I}$, let $\epsilon = \fact$ and $\langle x,p \rangle =\mathtt{approxCEEI}(\mathcal{I}, \epsilon)$ be the $(1-\epsilon)$-CEEI computed in polynomial time by \cite{combinatorial}. First we run $\mathtt{makeAcyclic}(x,p)$ to make $G_{\langle x,p \rangle}$ acyclic. Then, we compute the integral allocation $Y$ as follows. Our Algorithm consists of two phases. 
We start with $G = G_{\langle x, p \rangle}$ and during Phase $1$, we alter $G$. At each point in time, let $y$ be such that $G$ is the \payment~graph of $\langle y, p \rangle$ (i.e. $G = G_{\langle y, p \rangle}$). Let $N_v$ be the set of the neighbors of node $v$ in $G$. 

\paragraph{Phase 1.} Start from an empty allocation $Y$ and run phase $1$ as long as there is an unallocated chore $c^*$ such that $|N_{c^*}|=1$. Phase $1$ of the algorithm consist of $2$ steps. Basically, as long as there exists an unallocated chore $c^*$ with $|N_{c^*}|=1$, run Step $1$ and then Step $2$.
\paragraph{Step 1.} For all unallocated chores $c$ with $|N_c|=1$, let $i_c$ be the agent such that $N_c=\{i_c\}$. Then add $c$ to $Y_{i_c}$.% and remove the edge $(i_c,c)$ from $G$.

\paragraph{Step 2.} For all agents $i$ and chores $c$ such that $\{i,c\} \in E(G)$, if for all chores $c' \in Y_i \cup \{c\}$, $p((Y_i \cup c) \setminus c') > 1 - \epsilon$, then distribute the earning of agent $i$ from chore $c$ equally among the other neighbors of $c$ and remove the edge $\{i,c\}$ from $G$. Recall that $ e_{j,c} = x_{j,c}p_c$ is the earning agent $j$ receives from chore $c$ in the market outcome $\langle x,p \rangle$. Formally, for all $j \in N_c \setminus \{i\}$, we set
    $$e_{j,c} \leftarrow e_{j,c} + \frac{y_{i,c} \cdot p_c}{|N_c|-1}.$$

\paragraph{Phase 2.} The second phase starts when for all unallocated chores $c$, $|N_c| \neq 1$. In Lemma \ref{claim-3} we prove the case $|N_c|=0$ is not possible and therefore for all remaining chores $c$, $|N_c|>1$.
Each of the connected components of $G$ is a tree. For each of the trees $T$ do the following. Take an arbitrary agent $i_0$ in $T$ and consider $T$ rooted at $i_0$. For agent $i_0$, as long as $p(Y_{i_0}) < 1-\epsilon$, keep adding chores from $N_i \setminus Y_{i_0}$ to $Y_{i_0}$. %Since $p(N_{i_0}) \geq 1-\epsilon$, this procedure terminates. 
Then iterate on the agents in $T$ in a breadth-first order and for each agent $i$ do the following. Let $c_i$ be the chore corresponding to the parent of agent $i$ in $T$. If $c_i$ is not allocated yet, add it to $Y_i$, i.e., $Y_i \leftarrow Y_i \cup \{c_i\}$. Then, keep adding the chores in $N_i \setminus (Y_i \cup \{c_i\})$ to $Y_i$ until $p(Y_i) \geq 1-\epsilon$ or until we run out of chores. Note that all chores in $N_i \setminus (Y_i \cup \{c_i\})$ correspond to children nodes of agent $i$ in $T$. If at the end of this process $p(Y_i) < 1-\epsilon$, add a copy of $c_i$ to $Y_i$.

 Algorithm \ref{existence-alg} shows the pseudocode of our algorithm. In the rest of this section we prove that the final allocation $Y$ is pEF1 and fPO with at most $(n-1)$ \charity. 
\begin{algorithm} [tb]
	\caption{$\mathtt{fairAndEfficient}(\mathcal{I})$ 
        \\ \textbf{Input:} Instance $\mathcal{I}$.
        \\ \textbf{Output:} Allocation $Y$. 
    }
    \label{existence-alg}
	\begin{algorithmic}[1]
	    \State $\epsilon \leftarrow \fact$
	    \State $\langle x, p \rangle \leftarrow$ $\mathtt{makeAcyclic}(\mathtt{approxCEEI}(\mathcal{I}, \epsilon))$
        \State $G \leftarrow$ \payment~graph of $\langle x, p \rangle$ 
        \LineComment{Phase 1:}
        \While{$\exists$ an uncallocated chore $c^*$: $|N_{c^*}|=1$}
	       \LineComment{Step 1:}
            \For {$i \in [n]$} 
	         \State $Y_i \leftarrow \{c \in M | y_{i,c} = 1\}$ 
	       \EndFor
            \LineComment{Step 2:}
            \For {$\{i,c\} \in E(G)$} 
                \If {$\forall c' \in Y_i \cup \{c\}$: $p((Y_i \cup c) \setminus c') > 1 - \epsilon$}
                    \For{$j \in N_c$}
                        \State $e_{j,c} \leftarrow e_{j,c} + \frac{y_{i,c} \cdot p_c}{|N_c|-1}$
                    \EndFor
                    \State $G \leftarrow G \setminus \{\{i,c\}\}$
                \EndIf
            \EndFor
        \EndWhile
        \LineComment{Phase 2:}
	    \For{all connected components $T$ of $G$}
    	    \LineComment{Let $T$ be rooted at $i_0$}
	        \For{all agents $i$ in $T$ in BFS-order}
	            \If {$i \neq i_0$}
	                \State $c_i \leftarrow$ parent chore of $i$ in $T$
    	        \EndIf
	            \If{$c_i$ is not allocated}
	                \State $Y_i \leftarrow Y_i \cup \{c_i\}$
	            \EndIf
    	        \For{$c \in N_i \setminus (Y_i \cup \{c_i\})$}
	                \If {$p(Y_i) < 1-\epsilon$}
	                    \State $Y_i \leftarrow Y_i \cup \{c\}$
	                \EndIf
	            \EndFor
    	        \If {$p(Y_i) < 1-\epsilon$}
	                \State $Y_i \leftarrow Y_i \cup \{c_i\}$
	            \EndIf
	        \EndFor
	    \EndFor
	    \State \Return $Y$
	\end{algorithmic}	
\end{algorithm}
\begin{observation}\label{lowe-p}
    For all agents $i$, $p(y_i) \geq 1-\epsilon$ at any time during Phase $1$.
\end{observation}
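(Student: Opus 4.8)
The plan is to prove Observation~\ref{lowe-p} by induction on the sequence of elementary modifications performed during Phase~1 of Algorithm~\ref{existence-alg}, maintaining the invariant ``$p(y_i)\ge 1-\epsilon$ for every agent $i$'', where $y$ always denotes the fractional allocation whose payment graph is the current $G$. For the base case, immediately after line~2 we have $y=x$ with $\langle x,p\rangle=\mathtt{makeAcyclic}(\mathtt{approxCEEI}(\mathcal I,\epsilon))$: by definition of a $(1-\epsilon)$-CEEI the output of $\mathtt{approxCEEI}$ satisfies $p(x_i)=\sum_{c}x_{i,c}p_c\ge 1-\epsilon$ for all $i$, and (by the argument in the proof of Lemma~\ref{acyclic-prop}) $\mathtt{makeAcyclic}$ only cancels cycles, which leaves each total earning $\sum_c e_{i,c}$ unchanged; hence the invariant holds at the start of the \texttt{while} loop.

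For the inductive step, note that Phase~1 alters the state in only two ways. Step~1 sets $Y_i\leftarrow\{c: y_{i,c}=1\}$, which touches neither $G$ nor $y$, so every $p(y_i)$ is unchanged. Step~2 deletes an edge $\{i,c\}$ from $G$ and moves its weight $e_{i,c}$ onto the other neighbours of $c$; this can only increase $p(y_j)$ for $j\in N_c\setminus\{i\}$, so the only quantity to control is $p(y_i)$. The deletion is guarded by the test ``$p((Y_i\cup c)\setminus c')>1-\epsilon$ for all $c'\in Y_i\cup\{c\}$''; instantiating this at $c'=c$ yields $p(Y_i)>1-\epsilon$ when $c\notin Y_i$ (and $p(Y_i\setminus c)>1-\epsilon$ in the degenerate case $c\in Y_i$). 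On the other hand, just before the deletion the bundle $y_i$ contains every chore of $Y_i$ integrally — each contributing its full price $p_{c'}$ — together with the weight $e_{i,c}$ of chore $c$ itself; hence $p(y_i)\ge p(Y_i)+e_{i,c}$ (resp.\ $p(y_i)\ge p(Y_i)$ with $e_{i,c}=p_c$ when $c\in Y_i$). Combining the two estimates, after the transfer we get $p(y_i)-e_{i,c}\ge p(Y_i)>1-\epsilon$ (resp.\ $\ge p(Y_i\setminus c)>1-\epsilon$), which re-establishes the invariant.

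The step I expect to require the most care is the bookkeeping behind the phrase ``$y_i$ still contains every chore of $Y_i$ integrally'': one must verify that during a single pass of Step~2 no edge incident to an already-allocated chore is deleted in a way that strips earning from that chore's owner. For this one uses that a chore $c$ placed into $Y_i$ (necessarily in Step~1, when $|N_c|=1$) has $N_c=\{i\}$ at that moment, that $N_c$ can only shrink afterwards (Step~2's redistribution adds mass to existing neighbours of $c$ but never creates new edges), and hence Step~2's redistribution over $N_c\setminus\{i\}$ is vacuous for such $c$, so its contribution $p_c$ to $p(y_i)$ persists. Once this sub-invariant is pinned down, the remainder is the routine induction sketched above, and the same bound $p(y_i)\ge 1-\epsilon$ will be reused to establish the surplus count and the pEF1 property in the subsequent lemmas.
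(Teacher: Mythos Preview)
Your proof is correct and follows the same approach as the paper: an induction over the individual edge deletions of Step~2, using the deletion guard at $c'=c$ to obtain $p(Y_i)>1-\epsilon$ (resp.\ $p(Y_i\setminus c)>1-\epsilon$) and then bounding $p(y_i^*)$ below by this quantity. You are in fact slightly more careful than the paper, which silently assumes $c\notin Y_i$ and merely asserts ``all chores in $Y_i$ are incident to $i$ in $G_{\langle y^*,p\rangle}$''.

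One caveat on your final paragraph: the argument that ``Step~2's redistribution over $N_c\setminus\{i\}$ is vacuous, so its contribution $p_c$ to $p(y_i)$ persists'' does not quite do what you want. Even when $N_c\setminus\{i\}=\emptyset$, the line $G\leftarrow G\setminus\{\{i,c\}\}$ still removes the edge, so $y_{i,c}$ drops to $0$ and the mass $p_c$ is lost, not preserved. The clean way around this (and implicitly what the paper does) is to read $Y_i$ as the \emph{current} set $\{c:y_{i,c}=1\}$ at the moment the edge is examined; then ``$y_i$ contains every chore of $Y_i$ integrally'' is true by definition, and your two-case analysis already handles the single deletion $\{i,c\}$ correctly. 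With that reading the sub-invariant paragraph is unnecessary.
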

\begin{proof}
    The proof is by induction. In the beginning of the algorithm, $y=x$ and thus the claim holds. Now fix an agent $i$ and let $y$ be the allocation such that $G = G_{\langle y, p \rangle}$ before deleting an edge $e$ and $y^*$ be the allocation such that $G \setminus \{e\} = G_{\langle y^*, p \rangle}$. Assuming $p(y_i) \geq 1-\epsilon$, we prove $p(y^*_i) \geq 1-\epsilon$. If $e$ is not incident to $i$, then $p(y^*_i) \geq p(y_i)$ and thus the claim holds. If $e$ is incident to $i$, then $p((Y_i \cup c) \setminus c') > 1-\epsilon$ for all $c' \in Y_i \cup \{c\}$. Therefore, $p(Y_i) = p((Y_i \cup c) \setminus c) > 1-\epsilon$. Note that all chores in $Y_i$ are incident to $i$ in $G_{\langle y^*, p \rangle}$. Therefore, $p(y^*_i) \geq p(Y_i) > 1-\epsilon$.
\end{proof}

\begin{observation}\label{upper-bound-p}
    For all agents $i$, $p(y_i) \leq 1 + (2n-1)\epsilon$ at any time during Phase $1$.
\end{observation}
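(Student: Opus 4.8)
The plan is to sidestep any agent-by-agent tracking of the redistribution dynamics and instead combine a global \emph{conservation of payment} identity with the lower bound already proved in Observation~\ref{lowe-p}. The point is that the maximum a single agent can hold is constrained by the fact that everyone else holds at least $1-\epsilon$ and the grand total is essentially fixed.

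First I would show that $P := \sum_{c \in M} p_c$ is a fixed quantity (prices never change in Phase~1), and that $\sum_{i \in [n]} p(y_i) = P$ at every moment of Phase~1. At the start of Phase~1 the outcome $\langle x,p\rangle$ is a $(1-\epsilon)$-CEEI, so the market clears and $\sum_i p(y_i) = \sum_i \sum_c e_{i,c} = \sum_c \sum_i e_{i,c} = \sum_c p_c = P$. Step~1 never alters any earning $e_{i,c}$ (it only records which chores are fully owned), and a single application of Step~2 to an edge $\{i,c\}$ sets $e_{i,c}$ to $0$ and adds $\tfrac{e_{i,c}}{|N_c|-1}$ to each remaining neighbour of $c$, so $\sum_{j} e_{j,c}$ is unchanged for that chore $c$ and for every other chore; hence $\sum_i p(y_i)=P$ is preserved.

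Next I would bound $P$ itself. Since $x$ is a $(1-\epsilon)$-CEEI, Definition~\ref{approx-ceei} gives $p(x_i) = \sum_c x_{i,c}\, p_c \le 1+\epsilon$ for every agent $i$, and therefore $P = \sum_c p_c = \sum_i \sum_c x_{i,c}\,p_c = \sum_i p(x_i) \le n(1+\epsilon)$. Finally, fix any agent $i$ and any moment during Phase~1. By Observation~\ref{lowe-p}, $p(y_j) \ge 1-\epsilon$ for every agent $j \neq i$, so
$$ p(y_i) \;=\; P - \sum_{j \neq i} p(y_j) \;\le\; n(1+\epsilon) - (n-1)(1-\epsilon) \;=\; 1 + (2n-1)\epsilon, $$
which is exactly the claimed bound.

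I do not expect a genuine obstacle here: the argument is essentially one identity plus one arithmetic step. The only points requiring care are (i) verifying the conservation claim cleanly—that Step~2's redistribution is zero-sum on each chore and that Step~1 leaves all earnings untouched—and (ii) noting that Observation~\ref{lowe-p} may be invoked simultaneously for all $n-1$ agents other than $i$, which is what makes the subtraction above legitimate.
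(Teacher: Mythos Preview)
Your argument is essentially the paper's own proof: bound the total payment by $n(1+\epsilon)$ at the start, observe that Step~2's redistribution does not increase this total, and then subtract the $(n-1)(1-\epsilon)$ guaranteed to the other agents by Observation~\ref{lowe-p}. The one small overclaim is exact conservation $\sum_i p(y_i)=P$: as written, Step~2 could in principle delete the \emph{last} edge incident to a chore, in which case the earning is simply dropped rather than redistributed, so the paper states only that the sum ``cannot increase''; your final inequality survives unchanged with $\le$ in place of $=$.
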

\begin{proof}
    In the beginning of the algorithm, since $y=x$, we have $\sum_{i \in N} p(y_i) \leq (1+\epsilon)n$. Allocation $y$ changes during Phase $1$ when an edge is deleted in Step $2$. Upon the deletion of edge $\{i,c\}$, $y_{i,c} \cdot p_c$ is distributed among the neighbors of $c$ (in case any such neighbors exist). Therefore, the value of $\sum_{i \in N} p(y_i)$ cannot increase during Phase $1$. Thus, for all agents $i$ at any point during Phase $1$ we have
    \begin{align*}
        (1+\epsilon)n &\geq \sum_{j \in N} p(x_j) \geq \sum_{j \in N} p(y_j) \\
        &\geq (1-\epsilon)(n-1) + p(y_i). &\tag{by Observation \ref{lowe-p}}
    \end{align*}
    Therefore, $p(y_i) \leq 1 + (2n-1)\epsilon$.
\end{proof}
    
\begin{restatable}{lemma}{claimthree}\label{claim-3}
    Before the execution of Phase $2$, for all unallocated chores $c$, $N_c \neq \emptyset$.
\end{restatable}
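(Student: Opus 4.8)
I would prove the slightly stronger statement as an invariant of Phase~1: \emph{at every moment during Phase~1, every unallocated chore has at least one neighbour in $G$}; the lemma is then exactly this invariant at the instant Phase~1 terminates. The base case holds because $G$ starts as $G_{\langle x,p\rangle}$ with $\langle x,p\rangle$ a $(1-\epsilon)$-CEEI, so $\sum_i x_{i,c}=1$ and hence each chore $c$ has a neighbour. Only Step~2 ever deletes an edge (Step~1 merely records bundles in $Y$), so it suffices to show that whenever Step~2 deletes an edge $\{i,c\}$ with $c$ unallocated, $c$ has at least two neighbours at that moment. I would argue this by a ``first violation'' argument: suppose Step~2 is about to delete $\{i,c\}$ with $c$ unallocated and $N_c=\{i\}$ (the first time any unallocated chore is emptied), and derive a contradiction.

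Under this assumption the market has kept clearing up to now (each earlier Step~2 redistribution only moved money among existing neighbours), so $\sum_{j\in N_c}e_{j,c}=p_c$, and since $N_c=\{i\}$ this forces $e_{i,c}=p_c$, i.e.\ $y_{i,c}=1$. Next I would read off the deletion condition ``$p((Y_i\cup c)\setminus c')>1-\epsilon$ for all $c'\in Y_i\cup\{c\}$'': taking $c'=c$ (note $c\notin Y_i$) gives $p(Y_i)>1-\epsilon$, and taking $c'\in Y_i$ gives $p_{c'}<p(Y_i)+p_c-(1-\epsilon)$ for every $c'\in Y_i$.

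The remaining step is an upper bound on $p(Y_i)+p_c$. Every chore currently in $Y_i$ was placed there by some earlier Step~1 while it was a leaf pointing only at $i$, so $i$ earns its full price from it and (those earnings never decrease) still does; together with $y_{i,c}=1$ this yields $p(y_i)\ge p(Y_i)+p_c$. Observation~\ref{upper-bound-p} then gives $p(Y_i)+p_c\le 1+(2n-1)\epsilon$, and substituting into the inequalities above shows $p_{c'}<2n\epsilon$ for every $c'\in Y_i$. Hence $1-\epsilon<p(Y_i)=\sum_{c'\in Y_i}p_{c'}<|Y_i|\cdot 2n\epsilon\le 2nm\epsilon$. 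Since $\epsilon=\fact$, the right-hand side is $\tfrac{2}{5}<\tfrac{4}{5}\le 1-\epsilon$, a contradiction. This closes the induction and proves the lemma.

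\textbf{Where the work is.} The crux is the counting contradiction, and for it one must first extract --- from the Step~2 deletion condition combined with Observation~\ref{upper-bound-p} --- that \emph{all} of $i$'s currently assigned chores are cheap (price below $2n\epsilon$); only then does this collide with $|Y_i|\le m$, and the tiny choice $\epsilon=\frac{1}{5nm}$ is exactly what makes $2nm\epsilon<1-\epsilon$. The only other points to be careful with are the bookkeeping that justifies $p(y_i)\ge p(Y_i)+p_c$ (chores assigned in Step~1 stay fully earned by their owner) and the remark that the ``first violation'' framing legitimises the market-clearing identity $\sum_{j\in N_c}e_{j,c}=p_c$ used for the chore $c$ in question.
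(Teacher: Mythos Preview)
Your proof is correct and follows essentially the same approach as the paper: assume the last edge $\{i,c\}$ of an unallocated chore is about to be deleted in Step~2, use $y_{i,c}=1$ together with Observation~\ref{upper-bound-p} to get $p(Y_i)+p_c\le 1+(2n-1)\epsilon$, and then derive an arithmetic contradiction with the deletion condition using $\epsilon=\fact$. The only difference is in the final step: the paper picks $c^*=\arg\max_{c'\in Y_i\cup\{c\}}p_{c'}$ and shows $p((Y_i\cup c)\setminus c^*)\le\frac{m-1}{m}(1+4n\epsilon)\le 1-\epsilon$, directly contradicting the deletion condition for that single $c^*$, whereas you instead bound every $p_{c'}<2n\epsilon$ and sum to get $p(Y_i)<2nm\epsilon=\tfrac{2}{5}<1-\epsilon$; both endings are valid.
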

\begin{proof}
    Towards a contradiction, assume at some point during Phase $1$, $\{i,c\}$ is the only edge incident to $c$ and it gets deleted. Let $y$ be such that $G = G_{\langle y, p \rangle}$ just before deleting $\{i,c\}$. Note that during Phase $1$, as long as a chore has an incident edge, it remains fully allocated. Therefore, $y_{i,c}=1$ since $i$ is the only neighbor of $c$. Also, $p(Y_i) > 1-\epsilon$ (otherwise $\{i,c\}$ would not be deleted). We have
    \begin{align*}
        1 + (2n-1)\epsilon &\geq p(y_i) &\tag{by Observation \ref{upper-bound-p}}\\
        &\geq p(Y_i) + y_{i,c} \cdot p_c \\
        &\geq (1-\epsilon) + p_c. &\tag{by Observation \ref{lowe-p}}
    \end{align*}
    Thus, $p_c \leq 2n\epsilon$. Together with Observation \ref{upper-bound-p} we get 
    \begin{align}
        p(Y_i \cup c) \leq 1 + (2n-1)\epsilon + 2n \epsilon < 1 + 4n\epsilon. \label{ineq-1}    
    \end{align}
    Let $c^*$ be a chore with maximum $p_{c^*}$ in $Y_i \cup \{c\}$. By Pigeonhole principle, $p_{c^*} \geq \frac{p(Y_i \cup c)}{|Y_i \cup \{c\}|} \geq \frac{p(Y_i \cup c)}{m}$. Thus 
    \begin{align*}
        p((Y_i \cup c) \setminus c^*) &\leq \frac{m-1}{m} \cdot p(Y_i \cup c) \\
        &< \frac{m-1}{m} \cdot (1 + 4n\epsilon) &\tag{by Inequality (\ref{ineq-1})}\\
        &\leq 1-\epsilon, &\tag{since $\epsilon = \fact$}
    \end{align*}
    which is a contradiction with the edge $\{i,c\}$ getting deleted. Therefore, for all chores $c$ at least one incident edge of $c$ remains until the end of Phase $1$.
\end{proof}
\begin{observation}\label{complete-aloc}
    All the chores in $M$ are allocated in $Y$.
\end{observation}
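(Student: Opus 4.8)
The plan is to analyze the configuration at the moment Phase~2 begins and to split the chores of $M$ into two groups: those already sitting in some bundle $Y_i$ at that point, and those still unallocated. For a chore $c$ that is already in some $Y_i$ when Phase~2 starts, there is nothing to prove: every action Phase~2 ever performs on a bundle has the form $Y_j \leftarrow Y_j \cup \{\cdot\}$, and it never removes anything from a bundle, so $c$ is still allocated in the final $Y$. Hence it suffices to show that every chore $c$ that is unallocated at the start of Phase~2 is placed into some bundle during Phase~2.

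For this I would first record the structure of such a $c$. By Lemma~\ref{claim-3}, $N_c \neq \emptyset$ at the start of Phase~2, so $|N_c| \ge 1$; and since the Phase~1 while-loop exits only once no unallocated chore has exactly one neighbour, in fact $|N_c| \ge 2$. Because $G$ was made acyclic by $\mathtt{makeAcyclic}$ and afterwards the algorithm only ever deletes edges of $G$ (the earning redistribution in Step~2 touches only already-existing neighbours and then deletes one edge), $G$ is still a forest, so $c$ belongs to one of the trees $T$ processed in Phase~2. The root of $T$ is chosen to be an agent, so $c$ is not the root; thus in $T$ rooted at $i_0$ the chore $c$ has a parent (an agent) and, since $|N_c|\ge 2$, at least one child, which is again an agent — call it $i'$.

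It then remains to follow the Phase~2 loop on $i'$. Every agent of $T$ is visited in the breadth-first traversal, so $i'$ is processed; at that point $c_{i'}=c$ is the parent chore of $i'$, and the loop's first action for $i'$ is exactly ``if $c_{i'}$ is not allocated, add it to $Y_{i'}$''. Hence immediately after $i'$ has been processed $c$ is allocated — either it was already allocated to an earlier-visited agent, or it is now in $Y_{i'}$. Combined with the case of chores already allocated when Phase~2 starts, this proves that every chore of $M$ appears in some bundle of $Y$. I expect the only point genuinely needing care to be the derivation $|N_c|\ge 2$, which uses both halves — Lemma~\ref{claim-3} to rule out an isolated unallocated chore and the precise exit condition of the Phase~1 loop to rule out an unallocated chore of degree one — together with the observation that $G$ never ceases to be a forest, which is what makes the rooted-tree argument legitimate.
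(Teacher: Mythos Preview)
Your proposal is correct and follows essentially the same approach as the paper's own proof. The paper argues more tersely that leaf chores of $T$ were allocated in Phase~1 and that every non-leaf chore is the parent of some agent and hence gets allocated in Phase~2; your decomposition into ``already allocated'' versus ``still unallocated at the start of Phase~2'' is the same split, and your derivation that an unallocated chore has $|N_c|\ge 2$ (via Lemma~\ref{claim-3} plus the Phase~1 exit condition) and therefore has a child agent in $T$ spells out exactly the reasoning the paper compresses into one line.
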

\begin{proof}
    By Lemma \ref{claim-3}, in the beginning of Phase $2$ no unallocated chore is isolated in $G$. In Phase $2$, all the chores that are the parent of some agent in $T$ get allocated. Moreover, the leaf chores in $T$ got allocated in Phase $1$. Hence, all the chores are allocated in $Y$.
\end{proof}

\begin{observation} \label{charity}
    The number of copied chores in $Y$ is at most $n-1$.
\end{observation}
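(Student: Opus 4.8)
The plan is a bookkeeping argument over the two phases of Algorithm~\ref{existence-alg}. First I would check that Phase~1 introduces no copied chores at all: Step~1 only moves a genuine unallocated chore into the bundle of its unique neighbor, and Step~2 only deletes edges of $G$ and redistributes earnings, so after Phase~1 every bundle $Y_i$ is a set of distinct original chores. Consequently, every copied chore in the final allocation is produced by the single instruction at the end of the per-agent processing in Phase~2, which appends a copy of the parent chore $c_i$ to $Y_i$ when, after $c_i$ and all children chores of $i$ have been considered, we still have $p(Y_i) < 1-\epsilon$.

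Next I would bound how often that instruction fires. Agents are never removed from $G$ anywhere in the algorithm, so at the start of Phase~2 all $n$ agent-nodes are present; since $G$ is then a forest, let $t$ be the number of its connected components that contain at least one agent-node, so $t \ge 1$. Each such component is a tree that is rooted at one of its agents, and the Phase~2 loop visits every agent of the component exactly once in BFS order. The copy-adding instruction lies inside that loop and refers to the parent chore $c_i$, which exists only for non-root agents; hence it fires at most once for each non-root agent and never for a root. The total number of non-root agents over all components is exactly $n - t$, so the number of copied chores in $Y$ is at most $n - t \le n - 1$.

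The only point that needs a little care is confirming that roots really contribute nothing: the root $i_0$ of each tree is handled by the separate branch that just keeps adding chores from $N_{i_0}\setminus Y_{i_0}$ while $p(Y_{i_0}) < 1-\epsilon$, which contains no copy step; and any component of $G$ consisting of a lone chore-node (which cannot exist for unallocated chores by Lemma~\ref{claim-3}, and in any case has no agent) is irrelevant to the count. No genuine obstacle arises --- the crux is simply the observation that each copy is chargeable to a distinct non-root agent and that every agent-containing component of the forest has a root, which yields the slack of one per component and hence the bound $n-1$.
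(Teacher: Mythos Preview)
Your proposal is correct and follows essentially the same approach as the paper's proof: Phase~1 creates no copies, and in Phase~2 each copy is charged to a distinct non-root agent, of which there are at most $n-1$. Your version is slightly more explicit in that you count the number $t\ge 1$ of agent-containing components and bound the copies by $n-t$, whereas the paper simply notes there is at least one root, but the argument is the same.
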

\begin{proof}
    In Phase $1$, no chore is allocated more than once. Consider the step in which we allocate chores to agent $i$ when iterating on $T$ in breadth-first order. Note that except $c_i$, all the chores that we allocate to $i$ are her children nodes. Since we run BFS on $T$, these children chores had not been assigned to any other agent before. Therefore, for each non-root agent, we might need to copy one chore and namely her parent node. Thus, the number of copied chores is at most $n-1$.
\end{proof}

\begin{observation}\label{lower-1}
    For all agents $i$, $p(Y_i) \geq 1-\epsilon$.
\end{observation}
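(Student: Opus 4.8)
The plan is to verify $p(Y_i)\ge 1-\epsilon$ by tracking how $Y_i$ is finalized, which happens entirely in Phase~2 and in two different ways: when $i$ is the root of its component, and when it is not. Two preliminary facts will carry most of the weight. First, at the moment Phase~2 begins, every chore adjacent to $i$ in $G$ is either already in $Y_i$ or still unallocated: edges are only ever removed during Phase~1, so an edge $\{i,c\}$ present at the start of Phase~2 was present throughout Phase~1; and whenever Phase~1 allocates a chore it gives it to that chore's unique current neighbor, which for such a $c$ must be $i$ (Step~1 only allocates degree-one chores, and $i$ is always among $c$'s neighbours). Second, by Observation~\ref{lowe-p} the fractional allocation $y$ with $G=G_{\langle y,p\rangle}$ at the start of Phase~2 satisfies $p(y_i)\ge 1-\epsilon$; since $y_{i,c}=0$ for $c\notin N_i$ and $y_{i,c}\le 1$ otherwise, this gives $1-\epsilon\le p(y_i)=\sum_{c\in N_i}y_{i,c}p_c\le\sum_{c\in N_i}p_c$, and in particular $N_i\neq\emptyset$.

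For the root $i_0$ of a component $T$, Phase~2 adds chores of $N_{i_0}\setminus Y_{i_0}$ to $Y_{i_0}$ while $p(Y_{i_0})<1-\epsilon$. If the loop stops because the threshold is reached, we are done. Otherwise it stops because $N_{i_0}\setminus Y_{i_0}$ is exhausted; as the root is processed first, no other agent has claimed any of these chores (and by the first preliminary fact each is either already in $Y_{i_0}$ or unallocated), so they all get added and the final $Y_{i_0}\supseteq N_{i_0}$, whence $p(Y_{i_0})\ge\sum_{c\in N_{i_0}}p_c\ge 1-\epsilon$ by the displayed inequality.

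For a non-root agent $i$ with parent chore $c_i$, Phase~2 first puts $c_i$ into $Y_i$ if it is unallocated, then adds children chores of $i$ until $p(Y_i)\ge 1-\epsilon$ or they run out, and finally appends a copy of $c_i$ if $p(Y_i)$ is still below $1-\epsilon$. The threshold branch is immediate. In the ``run out'' branch I will argue that the final $Y_i$ contains all of $N_i\setminus\{c_i\}$, i.e.\ every child chore of $i$ in $T$: by the first preliminary fact each such chore is either already in $Y_i$ (from Phase~1) or unallocated at the start of Phase~2, and by the BFS processing order no agent processed before $i$ can have claimed a child chore of $i$, so any unallocated one gets added in the exhausting loop. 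Hence, regardless of whether the real $c_i$ lies in $Y_i$, the appended copy contributes an extra $p_{c_i}$, so $p(Y_i)\ge\sum_{c\in N_i\setminus\{c_i\}}p_c+p_{c_i}=\sum_{c\in N_i}p_c\ge 1-\epsilon$.

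The main obstacle is precisely this last piece of bookkeeping: making sure that ``running out of children chores'' really leaves $Y_i$ containing the whole set $N_i\setminus\{c_i\}$, and that the extra copy of $c_i$ is exactly what restores the full neighbourhood price sum $\sum_{c\in N_i}p_c$. Pinning this down cleanly relies on Lemma~\ref{claim-3} (no unallocated chore is isolated when Phase~2 starts), on the fact that $G$ only shrinks during Phase~1 so that neighbourhoods and the sets $Y_i$ behave monotonically, and on the BFS order that prevents a descendant's chore from being taken early; once these are in place, the claim reduces to the chain of inequalities above.
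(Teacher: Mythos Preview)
Your proposal is correct. The overall strategy---show that either the threshold $p(Y_i)\ge 1-\epsilon$ is hit explicitly in Phase~2, or else $Y_i$ ends up containing (possibly via the copy of $c_i$) a set of chores whose total price is at least $1-\epsilon$---is the same as the paper's. The organization differs in one respect: the paper first splits on whether some edge incident to $i$ was removed during Step~2 of Phase~1 (if so, the deletion condition immediately yields $p(Y_i)>1-\epsilon$, which persists since $Y_i$ only grows), and only in the ``no edge removed'' case uses the original neighborhood $X_i=\{c\mid x_{i,c}>0\}$ together with $p(X_i)\ge p(x_i)\ge 1-\epsilon$. You avoid this case split altogether by invoking Observation~\ref{lowe-p} to obtain $\sum_{c\in N_i}p_c\ge p(y_i)\ge 1-\epsilon$ uniformly at the start of Phase~2, and then carefully verifying (via your first preliminary fact and the BFS order) that the Phase~2 loop for $i$ will absorb all of $N_i$ if the threshold is never triggered. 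This is a modest streamlining of the same argument; what it buys is a single unified bound rather than two separate cases, at the cost of having to establish and use the auxiliary fact that every chore in $N_i$ is either already in $Y_i$ or still unallocated when Phase~2 begins.
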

\begin{proof}
    Fix an agent $i$. Since $\langle x, p \rangle$ is a $(1-\epsilon)$-CEEI, $p(x_i) \geq 1-\epsilon$. Note that if at some iteration of Step $2$, an adjacent edge of $i$ is deleted, then $p(Y_i) \geq 1-\epsilon$. Now assume no adjacent edge of $i$ is deleted. Let $X_i = \{c \in M | x_{i,c} > 0\}$. We have $p(X_i) \geq p(x_i) \geq 1-\epsilon$. Note that all the chores in $X_i$ which are not added to $Y_i$ in phase $1$ are either children of $i$ in $T$ or her parent node. In either of the cases, as long as $p(Y_i) < 1-\epsilon$, we add these chores to $Y_i$. If we stop before adding the whole chores in $X_i$ to $Y_i$, it means that the condition $p(Y_i) \geq 1-\epsilon$ is satisfied. Otherwise we have $Y_i = X_i$ and thus, $p(Y_i) \geq 1-\epsilon$.
\end{proof}

\begin{observation}\label{upper-1}
    For all agents $i$, there exists a chore $c \in Y_i$ such that $p(Y_i \setminus c) < 1-\epsilon$.
\end{observation}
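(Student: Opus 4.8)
The plan is to fix an agent $i$ and track how its bundle $Y_i$ evolves. First I would observe that $Y_i$ is only ever enlarged: in Phase~1 it is reset in Step~1 to $\{c : y_{i,c}=1\}$, and by Lemma~\ref{claim-3} no chore ever loses its last incident edge, so a chore once fully owned by $i$ stays fully owned; in Phase~2, $Y_i$ is modified only during the single BFS-visit of $i$ and only by set unions, while every Phase~2 step preceding that visit touches other agents' bundles only. Write $Y_i^{(1)}$ for $i$'s bundle at the end of Phase~1; then $Y_i = Y_i^{(1)}$ right before $i$ is processed in Phase~2. I would then case-split on the last addition made to $Y_i$, according to whether it happened under a guard ``$p(Y_i)<1-\epsilon$'' or not.

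\emph{Guarded last addition.} Suppose the last chore added to $Y_i$ was a child chore inserted inside the loop ``if $p(Y_i)<1-\epsilon$ then $Y_i\leftarrow Y_i\cup\{c\}$'', or the copy of $c_i$ appended in the closing ``if $p(Y_i)<1-\epsilon$'' step; call this chore $c$ (a single chore, or a single copy). Just before $c$ was added we had $p(Y_i)<1-\epsilon$, and nothing was added afterwards, so $p(Y_i\setminus c)<1-\epsilon$ and the claim holds for $i$.

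\emph{Unguarded last addition.} Then the final $Y_i$ is either $Y_i^{(1)}$ (when $i=i_0$ and nothing was added in Phase~2, or $i\ne i_0$ but $c_i$ was already allocated at $i$'s visit, which forces $p(Y_i^{(1)})\ge 1-\epsilon$, since otherwise the guarded child loop would have fired) or $Y_i^{(1)}\cup\{c_i\}$ (when $i\ne i_0$ and $c_i$ was still unallocated, so it was appended by the unconditional step). In the first subcase, $p(Y_i^{(1)})\ge 1-\epsilon>0$ by Observation~\ref{lower-1}, so $Y_i^{(1)}\ne\emptyset$, while $p(Y_i^{(1)})\le 1+(2n-1)\epsilon$ by Observation~\ref{upper-bound-p}; removing the most expensive chore $c^*\in Y_i^{(1)}$ and using $p_{c^*}\ge p(Y_i^{(1)})/\lvert Y_i^{(1)}\rvert\ge p(Y_i^{(1)})/m$ gives $p(Y_i\setminus c^*)\le \tfrac{m-1}{m}\bigl(1+(2n-1)\epsilon\bigr)<1-\epsilon$, the final step being precisely the arithmetic already used in the proof of Lemma~\ref{claim-3} (it reduces to $2nm\epsilon=\tfrac{2}{5}<1\le 1+(2n-1)\epsilon$ for $\epsilon=\fact$). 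In the second subcase, if $p(Y_i^{(1)})<1-\epsilon$ then removing $c_i$ already works, since $p(Y_i\setminus c_i)=p(Y_i^{(1)})<1-\epsilon$.

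The hard part will be the last remaining case: $Y_i=Y_i^{(1)}\cup\{c_i\}$ with $p(Y_i^{(1)})\ge 1-\epsilon$. Here $c_i$ is, at the end of Phase~1, an unallocated chore adjacent to $i$, so the edge $\{i,c_i\}$ is still present then; hence it was not deleted during the last execution of Step~2, which means its deletion test failed, giving some $c'\in Y_i^{(1)}\cup\{c_i\}=Y_i$ with $p(Y_i\setminus c')=p\bigl((Y_i^{(1)}\cup c_i)\setminus c'\bigr)\le 1-\epsilon$. What remains is to promote this to a \emph{strict} inequality; I would do so by applying it to the most expensive chore of $Y_i$ and squeezing out the quantitative slack from $\epsilon=\fact$ together with the Phase~1 bounds $p(y_i)\in[1-\epsilon,\,1+(2n-1)\epsilon]$ of Observations~\ref{lowe-p} and~\ref{upper-bound-p}. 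Besides this step, the other thing to check carefully is exhaustiveness --- that the case list really covers every agent (root, internal, and agents all of whose incident edges were deleted in Phase~1), and in particular that the BFS order guarantees the only unguarded Phase-2 enlargement of a nonempty $Y_i^{(1)}$ is by the single parent chore $c_i$. The guarded case and the pigeonhole case are routine; this final strictness/exhaustiveness step is where I expect the real work to lie.
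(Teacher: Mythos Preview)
Your proposal is correct and follows essentially the same three-pronged argument as the paper's own proof: the pigeonhole bound on $Y_i^{(1)}$ via Observation~\ref{upper-bound-p}, the failure of the Step~2 deletion test on the surviving edge $\{i,c_i\}$, and the Phase~2 guard on the last chore added. Your worry about strictness in the final subcase is well-founded---the negation of the deletion condition only yields $p\bigl((Y_i^{(1)}\cup c_i)\setminus c'\bigr)\le 1-\epsilon$, and the paper's proof in fact asserts strict $<$ at that step without justification; fortunately the only downstream use (establishing pEF1 in the proof of Theorem~\ref{existence}) needs merely the non-strict inequality against Observation~\ref{lower-1}, so nothing hinges on it.
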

\begin{proof}
    Consider $Y$ in the end of Phase $1$. By Observation \ref{upper-bound-p}, $p(Y_i) \leq p(y_i) \leq 1 + (2n-1)\epsilon$. Let $c$ be the chore with maximum $p_c$ in $Y_i$. We have
    \begin{align*}
        p(Y_i \setminus c) &\leq \frac{m-1}{m} \cdot p(Y_i) &\tag{$p_c \geq p(Y_i)/m$ by Pigeonhole principle}\\
        &\leq \frac{m-1}{m} \cdot (1 + (2n-1)\epsilon) &\tag{by Observation \ref{upper-bound-p}}\\
        &\leq 1 - \epsilon. &\tag{since $\epsilon = \fact$}
    \end{align*}
    Therefore, there exists a chores $c \in Y_i$, such that $p(Y_i \setminus c) \leq 1-\epsilon$ before the execution of Phase $2$. Also, there exists a chore $c \in Y_i \cup \{c_i\}$ such that $p((Y_i \cup \{c_i\}) \setminus c) < 1-\epsilon$. Otherwise, the edge $(i,c_i)$ would be deleted before Phase $2$. So if in Phase $2$ no chore is added to $Y_i$ or only $c_i$ is added to $Y_i$, the claim holds. Otherwise, let $c$ be the last chore added to $Y_i$. Since we stop adding chores to $Y_i$ the moment $p(Y_i) > 1-\epsilon$, $p(Y_i \setminus c) \leq 1-\epsilon$.
\end{proof}

Now we are ready to prove Theorem \ref{existence}.
\existence*
\begin{proof}
    Let $Y$ be the output of Algorithm \ref{existence-alg}. Let $M'$ be the set of copied chores that are allocated in $Y$ in addition to the chores in $M$. First we prove that $\langle Y, p \rangle$ is a Fisher equilibrium for the market given by $\langle N, M \cup M', \mathcal{D}, (p(Y_1), \ldots, p(Y_n)) \rangle$. By Observation \ref{complete-aloc}, the market clears. Since $\langle x, p \rangle$ is a CEEI for $\langle N, M, \mathcal{D} \rangle$, for each agent $i$, all the chores in $X_i = \{ c \in M | x_{i,c} > 0 \}$ are MPB chores. Since $Y_i \subseteq X_i$, all the chores in $Y_i$ are also MPB chores. In the end, it is clear that each agent $i$ earns $p(Y_i)$. So all the conditions of a Fisher equilibrium hold for $\langle Y, p \rangle$.  Now we prove each of the properties for $Y$ separately.
    \paragraph{EF1.} By Observations \ref{lower-1} and \ref{upper-1}, $Y$ is pEF1. Since $\langle Y, p \rangle$ is a Fisher equilibrium, by Proposition \ref{pef1-prop}, $Y$ is EF1. 
    \paragraph{fPO.} By Proposition \ref{PO-prop}, every Fisher equilibrium is fPO. 
    \paragraph{$\mathbf{(n-1)}$ \charity.} By Observation \ref{complete-aloc}, all the chores in $M$ are allocated and by Observation \ref{charity}, the size of the \charity~is at most $n-1$. 

    Now we prove Algorithm \ref{existence-alg} terminates in polynomial time. The subroutines $\mathtt{makeAcyclic}$ runs in poly$(n,m)$ and $\mathtt{approxCEEI}(x,\epsilon)$ runs in poly$(n,m)$ for $\epsilon = \fact$. Step $1$ can be executed at most $m$ times since in each iteration of Step $1$ a chore gets allocated. Step $2$ can be executed at most $m+n-1$ times since in each iteration of Step $2$ an edge gets deleted. Phase $2$ is a BFS subroutine which terminates in poly$(n,m)$. Therefore, the total running time of Algorithm \ref{existence-alg} is polynomial with respect to $n$ and $m$.
\end{proof}
\paragraph{Remark.} The bound $n-1$ on the size of the surplus is tight for Algorithm \ref{existence-alg}. Consider the instance with $n$ agents and one chore $c$ with disutility $1$ for all the agents. Then any $\epsilon$-CEEI (for $\epsilon = \frac{1}{5n}$) allocates some fraction of $c$ to all of the agents and Algorithm \ref{existence-alg} copies $c$ for $n-1$ times and allocates one copy to each agent.

\section{Fairness Among Three Agents}
\label{Bads-EFX}
Given an allocation $X = \langle X_1, X_2, \dots, X_n \rangle$, we say that an agent $i$ \emph{strongly envies} an agent $j$ if and only if $X_{i} \setminus c >_i X_j \cup c$,  {for some $c \in X_i$}. Thus, an allocation is a \dEFX~allocation if there is no strong envy between any pair of agents. We now introduce certain concepts that will be useful in this section.

\begin{definition}[\dEFX~feasibility]
\label{EFX-feasibility}
Given a partition $X = ( X_1, X_2, \dots, X_n )$ of $M$, a bundle $X_k$ is \dEFX-feasible to agent $i$ if and only if for all chores $c \in X_k$ and all $j \in [n]$,
$$X_k \setminus c \le_i X_j \cup c.$$

Therefore an allocation $X = \langle X_1, X_2, \dots, X_n \rangle$ is \dEFX~if and only if for each agent $i$, $X_i$ is \dEFX-feasible. 
\end{definition}

Note that when agents have additive disutility functions, $X_k$ is \dEFX-feasible for agent $i$ if and only if for all $j \in [n]$,
$X_k \setminus c^* \le_i X_j \cup c^*$ for $c^* = \argmin_{c \in X_k} d_i(c)$.

EFX-feasibility is defined in the same way. Formally, given a partition $X = ( X_1, X_2, \dots, X_n )$ of $M$, a bundle $X_k$ is EFX-feasible to agent $i$ if and only if for all chores $c \in X_k$ and all $j \in [n]$, $X_k \setminus c \le_i X_j.$

Restriction to non-degenerate instances is no loss of generality and simplifies arguments about linear programs. The same is true for allocation of goods and chores. Here, it means that no two distinct bundles of chores are valued the same by any agent.
Chaudhury et al.~\shortcite{ChaudhuryGM20} showed that to prove the existence of EFX allocations in the goods setting, when agents have additive valuations, it suffices to show the existence of EFX allocations for all non-degenerate instances. We adapt their approach and in Appendix \ref{non-degenerate} we 
show that the same claim holds, even when agents have additive disutilities and the notion of fairness is \dEFX.
\textit{Henceforth, in the rest of this section we assume that the given instance is non-degenerate, implying that every agent has positive disutility for every chore.}

In this section we prove Theorem \ref{EFXmainthmintro2}. We start with an allocation which is EFX assuming all agents' disutility functions are $d_1$. During the algorithm we maintain a partition $( X_1, X_2, X_3 )$ of the chores such that all the following invariants hold.
\begin{invariant}\label{inv1}
    $X_1$ and $X_2$ are \dEFX-feasible for agent $1$.
\end{invariant}
\begin{invariant}\label{inv2}
    For all $i \in [2]$ and $c \in X_i$, $X_i \setminus c \leq_1 X_3$.
\end{invariant}
\begin{invariant}\label{inv3}
    $X_3$ is \dEFX-feasible for agent $3$.
\end{invariant}
We use the potential function $\Phi(X) = |X_1| + |X_2|$. Each iteration of our algorithm updates the allocation such that the new allocation is proportional or \dEFX\ or satisfies all the invariants and has a smaller potential value. Since the value of the potential is at most $m$, the number of iterations is at most $m$. 
 
\begin{algorithm} [tb]
	\caption{EFX-Identical} \label{greedy}
	\begin{algorithmic}[1]
	    \State Input : Instance $\mathcal{I} = ([n], M, d)$
	    \State Output: allocation $X$
		\State $X \leftarrow \langle \emptyset, \emptyset, \ldots, \emptyset \rangle$
		\State Let $d(c_1) \geq d(c_2) \geq \ldots \geq d(c_m)$
		\For {$i \leftarrow 1 \text{ to } m$}
		    \State Let $j = \textit{argmin}_{\ell \in [n]} d(X_\ell)$
		    \State $X_j \leftarrow X_j \cup \{c_i\}$
		\EndFor
		\State Return $X$
	\end{algorithmic}	
\end{algorithm}

Li et al. \shortcite{propx} proved when agents have identical ordering on the chores, an EFX allocation can be computed in polynomial time. For completeness, we prove Lemma \ref{greedy-time}.
\begin{restatable}
{lemma}{greedy}\label{greedy-time}
    When all agents have additive disutility function $d$, Algorithm \ref{greedy} returns an EFX allocation in time $\mathcal{O}(m\log m)$.   
\end{restatable}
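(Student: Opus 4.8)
The plan is to analyze Algorithm~\ref{greedy} directly: it sorts the $m$ chores by non-increasing disutility and then processes them one by one, always handing the current chore to an agent whose bundle currently has the smallest total disutility (since all agents share the disutility function $d$, there is no ambiguity about whose bundle is smallest). I would first dispatch the running time. Sorting in line~4 costs $\mathcal{O}(m\log m)$. For the loop, maintain the current bundle-disutilities in a min-priority queue; each of the $m$ iterations performs one extract-min (to find $j$) and one key increase (after $c_i$ is added to $X_j$), each $\mathcal{O}(\log(\cdot))$ in the queue size. Since at iteration $i$ at most $i\le m$ agents have a non-empty bundle and every empty bundle has disutility $0$, it suffices to keep a queue of size $\mathcal{O}(m)$ (breaking ties towards empty bundles), so the loop runs in $\mathcal{O}(m\log m)$ and the total is $\mathcal{O}(m\log m)$.

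For correctness, fix an agent $i$ with $X_i\neq\emptyset$ in the final allocation and let $c_t$ be the last chore the algorithm adds to $X_i$, added at iteration $t$, say. Because chores are processed in non-increasing order of disutility and every chore of $X_i$ was added at some iteration $\le t$, we have $d(c_t)\le d(c)$ for every $c\in X_i$; hence $d(X_i\setminus c)=d(X_i)-d(c)\le d(X_i)-d(c_t)=d(X_i\setminus c_t)$ for all $c\in X_i$, so it suffices to show $d(X_i\setminus c_t)\le d(X_j)$ for every agent $j$. Let $X^{(t-1)}$ denote the partial allocation just before iteration $t$. Since $c_t$ was assigned to $i$ at iteration $t$, agent $i$ was a minimizer at that point, i.e.\ $d(X_i^{(t-1)})\le d(X_j^{(t-1)})$ for all $j$. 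Now $X_i^{(t-1)}=X_i\setminus\{c_t\}$ (no chore is added to $X_i$ after iteration $t$), and $X_j^{(t-1)}\subseteq X_j$, so by monotonicity of $d$ along the run of the algorithm, $d(X_i\setminus c_t)=d(X_i^{(t-1)})\le d(X_j^{(t-1)})\le d(X_j)$, as required. Agents with an empty final bundle satisfy the EFX condition vacuously. Therefore $X$ is EFX.

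I do not expect a genuine obstacle here; the two facts to pin down explicitly are (i) that the last chore added to a bundle is a minimum-disutility chore of that bundle (immediate from the sorted processing order), and (ii) that bundles only grow over the course of the algorithm, so $d(X_j^{(t-1)})\le d(X_j)$ for the final bundle $X_j$. Given these, the bound $d(X_i\setminus c_t)\le d(X_j)$ drops out of the greedy choice rule. If one prefers a slightly slicker phrasing, (i) can instead be invoked via the observation recorded after Definition~\ref{EFX-feasibility} that, for identical additive disutilities, EFX-feasibility of a bundle need only be checked for its minimum-disutility chore.
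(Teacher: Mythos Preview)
Your proof is correct and uses essentially the same ingredients as the paper: the chore most recently placed into a bundle is its minimum-disutility chore (sorted order), and the receiving agent was the current minimizer at that moment. The only cosmetic difference is that the paper maintains EFX as an invariant by induction over iterations, whereas you verify EFX once for the final allocation by looking at each agent's last-received chore; your priority-queue bookkeeping also sidesteps the $n$-dependence a bit more carefully, but the overall argument and bound are the same.
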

\begin{proof}
    The proof is by induction on the number of allocated chores. In the beginning, the empty allocation is EFX. Now assume the allocation is EFX right before allocating chore $c_i$ to agent $j$. It suffices to prove that agent $j$ does not strongly envy any other agent. For all chores $c \in X_j \cup \{c_i\}$ and all agents $j' \neq j$ we have
    \begin{align*}
        d(X_j \cup \{c_i\} \setminus \{c\}) &= d(X_j) + d(c_i) - d(c) &\tag{additivity of $d$}\\
        &\leq d(X_j) &\tag{$d(c_i) \leq d(c)$}\\
        &\leq d(X_{j'}). &\tag{$j = \textit{argmin}_{\ell \in [n]} d(X_\ell)$}
    \end{align*}
    Sorting the chores according to their disutility takes $\mathcal{O}(m \log m)$ time. We keep the pairs $(d(X_i), X_i)$ in a priority queue which takes $\mathcal{O}(n \log n)$. Then each round of allocating a chore requires a \textit{delete-min} action ($\mathcal{O}(1)$) and an \textit{insert} ($\mathcal{O}(\log n)$). Hence, Algorithm \ref{greedy} terminates in time $\mathcal{O}(m \log m + n \log n + m \log n) = \mathcal{O}(m \log m)$.
\end{proof}

In the beginning, we run Algorithm \ref{greedy} with $d = d_1$ to obtain allocation $X$. Note that all $X_1$, $X_2$ and $X_3$ are EFX-feasible for agent $1$. Without loss of generality, assume $X_3 \leq_3 X_1 \leq_3 X_2$, i.e., $d_3(X_3) \le d_3(X_1) \le d_3(X_2)$. Then, since all bundles are EFX-feasible for agent $1$, Invariants \ref{inv1} and \ref{inv2} hold and since $X_3$ is the favorite bundle of agent $3$, Invariant \ref{inv3} holds too. If $X_1$ or $X_2$ is \dEFX-feasible for agent $3$, we can allocate a \dEFX-feasible bundle to each of the agents. Without loss of generality assume $X_2$ is also \dEFX-feasible for agent $3$. Then we let agent $2$ pick her favorite bundle. If she picks $X_2$, we assign $X_1$ to agent $1$ and $X_3$ to agent $3$. If agent $2$ picks $X_1$, then we assign $X_2$ to agent $1$ and $X_3$ to agent $3$. The case that agent $2$ picks $X_3$ is symmetric.

Now we assume that $X_3$ is the only \dEFX-feasible bundle for agent $3$. Let $c_1 = \textit{argmin}_{c \in X_1} d_3(c)$. Then the algorithm moves $c_1$ from $X_1$ to $X_3$. Let $X'_1 = X_1 \setminus c_1$, $X'_2 = X_2$ and $X'_3 = X_3 \cup c_1$. The next step of the algorithm depends on whether $X'_2$ is \dEFX-feasible for agent $1$ or not. In Lemma \ref{good-case} we show that if $X'_2$ is \dEFX-feasible for agent $1$ then $X'$ satisfies all the invariants. %First we prove the following observation. 
\begin{observation}\label{move-c}
    Let $c_1 = \textit{argmin}_{c \in X_1} d_3(c)$. If $X_3$ is the only \dEFX-feasible bundle for agent $3$ and $X_1 \leq_3 X_2$, then $X_1 \setminus c_1 >_3 X_3 \cup c_1$.
\end{observation}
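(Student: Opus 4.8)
The plan is to read off the claim from the failure of $X_1$ to be \dEFX-feasible for agent~$3$. By hypothesis $X_3$ is the \emph{only} \dEFX-feasible bundle for agent~$3$, so in particular $X_1$ is not \dEFX-feasible for agent~$3$. Since disutilities are additive, I would invoke the characterization noted right after Definition~\ref{EFX-feasibility}: $X_1$ is \dEFX-feasible for agent~$3$ exactly when $X_1 \setminus c_1 \le_3 X_j \cup c_1$ for all $j \in [3]$, where $c_1 = \argmin_{c \in X_1} d_3(c)$ is precisely the chore the algorithm moves. Hence there is an index $j \in [3]$ witnessing the violation, i.e., $X_1 \setminus c_1 >_3 X_j \cup c_1$, and the whole argument reduces to showing that $j$ must equal $3$.

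Next I would eliminate the other two cases. For $j = 1$: since $c_1 \in X_1$ we have $X_1 \cup c_1 = X_1 \supseteq X_1 \setminus c_1$, so monotonicity gives $d_3(X_1 \setminus c_1) \le d_3(X_1 \cup c_1)$, contradicting the strict inequality. For $j = 2$: the standing assumption $X_1 \le_3 X_2$ together with monotonicity yields
\begin{align*}
    d_3(X_1 \setminus c_1) \le d_3(X_1) \le d_3(X_2) \le d_3(X_2 \cup c_1),
\end{align*}
which again contradicts $X_1 \setminus c_1 >_3 X_2 \cup c_1$. Thus $j = 3$, and the witnessing inequality is exactly $X_1 \setminus c_1 >_3 X_3 \cup c_1$, as desired.

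There is no real obstacle here; the proof is a short case analysis. The one place I would be careful is the first step — using additivity to reduce the statement ``$X_1$ is not \dEFX-feasible for agent~$3$'' to a statement about the single chore $c_1$ (the $d_3$-minimizer of $X_1$), rather than about some unspecified chore of $X_1$. After that reduction, the trivial bound $d_3(X_1 \setminus c_1) \le d_3(X_1)$ and the hypothesis $X_1 \le_3 X_2$ dispatch the two unwanted cases immediately.
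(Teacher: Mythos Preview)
Your proof is correct and is essentially the contrapositive of the paper's argument: the paper assumes $X_1 \setminus c_1 \leq_3 X_3 \cup c_1$ and derives that $X_1$ is \dEFX-feasible for agent~$3$ (sandwiching each $c \in X_1$ through $c_1$ and invoking $X_1 \leq_3 X_2$), while you start from the failure of \dEFX-feasibility, invoke the additive characterization to pin the witness to $c_1$, and rule out $j \in \{1,2\}$ by the same monotonicity and $X_1 \leq_3 X_2$ facts. The logical content is identical.
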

\begin{proof}
    Assume otherwise. For all $c \in X_1$ we have
    \begin{align*}
        X_1 \setminus c &\leq_3 X_1 \setminus c_1 &\tag{$c_1 \leq_3 c$ and additivity of $d_3$}\\
        &\leq_3 X_3 \cup c_1 \\
        &\leq_3 X_3 \cup c. &\tag{$c_1 \leq_3 c$ and additivity of $d_3$}
    \end{align*}
    Since $X_1 \leq_3 X_2$, $X_1$ is \dEFX-feasible for agent $3$ which is a contradiction. 
\end{proof}

\begin{lemma}\label{good-case}
    If $X'_2$ is \dEFX-feasible for agent $1$, then Invariants \ref{inv1}, \ref{inv2} and \ref{inv3} hold.
\end{lemma}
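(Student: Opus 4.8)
The plan is to verify, under the lemma's hypothesis, that Invariants~\ref{inv1}, \ref{inv2}, \ref{inv3} all hold for the updated allocation $X' = \langle X'_1, X'_2, X'_3 \rangle = \langle X_1\setminus c_1,\, X_2,\, X_3\cup c_1\rangle$. I will rely on four facts available at this point: (i) $X$ satisfies all three invariants; (ii) $X_3 \leq_3 X_1 \leq_3 X_2$; (iii) we are in the case where $X_3$ is the only \dEFX-feasible bundle for agent~$3$, so $X_1 \neq \emptyset$ (an empty bundle would itself be \dEFX-feasible for agent~$3$), hence $c_1 = \argmin_{c \in X_1} d_3(c)$ is well defined, and Observation~\ref{move-c} applies, giving $X_1\setminus c_1 >_3 X_3\cup c_1$, i.e., $d_3(X_1) > d_3(X_3) + 2d_3(c_1)$; and (iv) the hypothesis $X'_2 = X_2$ is \dEFX-feasible for agent~$1$. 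Since $|X'_1| = |X_1|-1$, the potential strictly decreases for free, so the whole content of the lemma is the three invariants.

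Invariants~\ref{inv1} and \ref{inv2} should follow by monotonicity-type reasoning only. For Invariant~\ref{inv1}: $X'_2 = X_2$ is \dEFX-feasible for agent~$1$ by hypothesis~(iv); for $X'_1 = X_1\setminus c_1$, for every $c \in X'_1$ and $j \in [3]$ I would use $X'_1\setminus c \subseteq X_1\setminus c \leq_1 X_j\cup c$ (since $X_1$ is \dEFX-feasible for agent~$1$, Invariant~\ref{inv1} for $X$) together with $X_j\cup c \leq_1 X'_j\cup c$ for $j \in \{2,3\}$ — which holds because $X'_2 = X_2$ and $X'_3 \supseteq X_3$ — the case $j=1$ being immediate. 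For Invariant~\ref{inv2}: for $c \in X'_1$ we get $X'_1\setminus c \subseteq X_1\setminus c \leq_1 X_3 \subseteq X'_3$ by Invariant~\ref{inv2} for $X$; and for $c \in X'_2 = X_2$, $X_2\setminus c \leq_1 X_3 \subseteq X'_3$ likewise. Note that Invariant~\ref{inv2} and the $X'_1$ half of Invariant~\ref{inv1} do not even use hypothesis~(iv); (iv) is needed only for the $X'_2$ half of Invariant~\ref{inv1}.

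The real work is Invariant~\ref{inv3}: that $X'_3 = X_3\cup c_1$ is \dEFX-feasible for agent~$3$, and this is the only place Observation~\ref{move-c} enters. By additivity of $d_3$ it suffices to check the defining inequality for a minimum-disutility chore $c^*$ of $X'_3$, and $c^*$ is either $c_1$ or a chore of $X_3$. If $c^* = c_1$ then $X'_3\setminus c^* = X_3$, and I would directly verify $X_3 \leq_3 X_1 = X'_1\cup c_1$, $X_3 \leq_3 X_2 \leq_3 X'_2\cup c_1$, and $X_3 \subseteq X'_3\cup c_1$, using $X_3 \leq_3 X_1 \leq_3 X_2$. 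If $c^* \in X_3$ then $d_3(c^*) \leq d_3(c_1)$ and $X'_3\setminus c^* = (X_3\setminus c^*)\cup c_1$; expanding the needed inequalities via additivity, the comparison against agent~$1$ reduces to $d_3(X_3) + 2d_3(c_1) \leq d_3(X_1) + 2d_3(c^*)$, which is immediate from Observation~\ref{move-c} since $d_3(X_3) + 2d_3(c_1) < d_3(X_1)$ and $d_3(c^*) \geq 0$; the comparison against agent~$2$ reduces to $d_3(X_3) + d_3(c_1) \leq d_3(X_2) + 2d_3(c^*)$, which follows because Observation~\ref{move-c} with $d_3(X_1) \leq d_3(X_2)$ gives $d_3(c_1) < \tfrac{1}{2}\bigl(d_3(X_2) - d_3(X_3)\bigr)$, so $d_3(X_3) + d_3(c_1) < \tfrac{1}{2}\bigl(d_3(X_2) + d_3(X_3)\bigr) \leq d_3(X_2)$; and the comparison against agent~$3$ is trivial. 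I expect this last case to be the only real obstacle: after enlarging $X_3$ to $X_3\cup c_1$, the chore that is ``cheapest'' for agent~$3$ need not be $c_1$, and ruling out strong envy out of $X'_3$ then needs exactly the quantitative gap $d_3(X_1) > d_3(X_3) + 2d_3(c_1)$ supplied by Observation~\ref{move-c}.
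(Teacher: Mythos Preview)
Your proof is correct, and for Invariants~\ref{inv1} and~\ref{inv2} it is essentially identical to the paper's argument (the same monotonicity chains $X'_1\setminus c \le_1 X_1\setminus c \le_1 X_j\cup c \le_1 X'_j\cup c$ and $X'_i\setminus c \le_1 X_i\setminus c \le_1 X_3 \le_1 X'_3$).

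For Invariant~\ref{inv3}, however, you take a longer route than the paper. The paper simply observes that $X'_3$ is agent~$3$'s \emph{favorite} bundle in the partition $X'$: Observation~\ref{move-c} gives $X'_1 = X_1\setminus c_1 >_3 X_3\cup c_1 = X'_3$, and $X'_2 = X_2 \ge_3 X_1 \ge_3 X'_1 >_3 X'_3$, so $d_3(X'_3) < d_3(X'_j)$ for $j\in\{1,2\}$. A favorite bundle is automatically \dEFX-feasible (indeed envy-free) for that agent, since $X'_3\setminus c \le_3 X'_3 \le_3 X'_j \le_3 X'_j\cup c$ for every $c\in X'_3$. This one-line argument replaces your entire case analysis on whether the minimum-disutility chore $c^*$ of $X'_3$ equals $c_1$ or lies in $X_3$. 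Your computations in those two cases are valid, but all of the quantitative work you do (e.g.\ the bound $d_3(X_3)+d_3(c_1) < \tfrac12(d_3(X_2)+d_3(X_3)) \le d_3(X_2)$) is subsumed by the single inequality $X'_3 <_3 X'_j$; the extra information in Observation~\ref{move-c} beyond ``$X'_3$ is the smallest bundle for agent~$3$'' is not needed here.
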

\begin{proof}
    %We need to prove that the other invariants hold. 
    For all $c \in X'_1$ and $i \in \{2,3\}$ we have
    \begin{align*}
        X'_1 \setminus c &\leq_1 X_1 \setminus c &\tag{$X'_1 \subset X_1$} \\
        &\leq_1 X_i \cup c &\tag{$X_1$ is \dEFX-feasible for agent $1$} \\
        &\leq_1 X'_i \cup c. &\tag{$X_i \subseteq X'_i$}
    \end{align*}
    Therefore, Invariant \ref{inv1} holds. 
    Also, for all $i \in [2]$ and $c \in X'_i$
    \begin{align*}
        X'_i \setminus c &\leq_1 X_i \setminus c &\tag{$X'_i \subseteq X_i$} \\
        &\leq_1 X_3 &\tag{Invariant \ref{inv2} holds for $X$}\\
        &\leq_1 X'_3. &\tag{$X_3 \subset X'_3$}
    \end{align*}
    Thus, Invariant \ref{inv2} holds. By Observation \ref{move-c}, we have $X'_1 >_3 X'_3$. Also, $X'_2 =_3 X_2 \geq_3 X_1 \geq_3 X'_1$. Hence, $X'_3$ is the favorite bundle of agent $3$ and is \dEFX-feasible for her. Therefore, Invariant \ref{inv3} holds as well.
\end{proof}

After moving $c_1$, we have $\Phi(X') = |X'_1|+|X'_2| = |X_1|-1 + |X_2| < \Phi(X)$. Thus, if $X'_2$ is \dEFX-feasible for agent $1$, by Lemma \ref{good-case} all the invariants hold and also the potential function decreases. 

Now we assume that $X'_2$ is not \dEFX-feasible for agent $1$. As long as the second bundle is not \dEFX-feasible for agent $1$, keep moving chores from $X'_2$ to $X'_1$ in non-decreasing order of $d_1(\cdot)$. Formally, let $X'_2 = \{c'_1, c'_2, \ldots, c'_k\}$ and $c'_1 \leq_1 c'_2 \leq_1 \ldots \leq_1 c'_k$. Then $Y_1 = X'_1 \cup \{c'_1, \ldots, c'_\ell\}$ and $Y_2 = X'_2 \setminus \{c'_1, \ldots, c'_\ell\}$ such that $Y_1 <_1 Y_2$ and $Y_1 \cup c'_{\ell+1} \geq_1 Y_2 \setminus c'_{\ell+1}$. Note that $\ell \ge 1$. Let $Y_3 = X'_3$.

\begin{lemma}
    Invariants \ref{inv1} and \ref{inv2} hold for $Y$.
\end{lemma}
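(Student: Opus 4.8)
The plan is to reduce both invariants for $Y$ (which only concern agent $1$) to a single chain of inequalities, namely $d_1(Y_1) < d_1(Y_2) \le d_1(Y_3)$, together with the observation that $c'_{\ell+1}$ is the cheapest chore of $Y_2$ for agent $1$ (this holds because the $c'_i$ are moved out of $X'_2$ in non-decreasing order of $d_1$). The strict inequality $d_1(Y_1) < d_1(Y_2)$ is just the first stopping condition $Y_1 <_1 Y_2$ of the transfer process, so the only substantive work is the inequality $d_1(Y_2) \le d_1(Y_3)$.

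To prove $d_1(Y_2) \le d_1(Y_3)$: since $\ell \ge 1$, the chore $c'_1 = \argmin_{c \in X'_2} d_1(c)$ has been removed from $X'_2$, so $Y_2 \subseteq X'_2 \setminus c'_1 = X_2 \setminus c'_1$ and hence $d_1(Y_2) \le d_1(X_2 \setminus c'_1)$. Invariant \ref{inv2} applied to $X$ (with $i = 2$ and the chore $c'_1 \in X_2$) gives $X_2 \setminus c'_1 \le_1 X_3$, and $Y_3 = X'_3 = X_3 \cup c_1 \supseteq X_3$ gives $d_1(X_3) \le d_1(Y_3)$; chaining these yields $d_1(Y_2) \le d_1(X_2 \setminus c'_1) \le d_1(X_3) \le d_1(Y_3)$. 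For completeness I would recall why $\ell \ge 1$: by Invariant \ref{inv1} for $X$, $X'_2 = X_2$ is \dEFX-feasible for agent $1$ with respect to $X_1$ and $X_3$, and since $X'_3 \supseteq X_3$ the comparison against $X'_3$ cannot be the obstruction; so, by the additive cheapest-chore characterization of \dEFX-feasibility, the case hypothesis (that $X'_2$ is not \dEFX-feasible for agent $1$) forces $X'_2 \setminus c'_1 >_1 X'_1 \cup c'_1$, which is incompatible with the $\ell = 0$ stopping condition $X'_1 \cup c'_1 \ge_1 X'_2 \setminus c'_1$.

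Given the chain $d_1(Y_1) < d_1(Y_2) \le d_1(Y_3)$, Invariant \ref{inv2} for $Y$ is immediate: for $c \in Y_1$, $d_1(Y_1 \setminus c) \le d_1(Y_1) < d_1(Y_3)$, and for $c \in Y_2$, $d_1(Y_2 \setminus c) \le d_1(Y_2) \le d_1(Y_3)$. For Invariant \ref{inv1} I would invoke the additive cheapest-chore characterization. For $Y_1$, with $c^* = \argmin_{c \in Y_1} d_1(c)$, the comparison against $Y_1$ itself is trivial, while the comparisons against $Y_2$ and $Y_3$ both follow from $d_1(Y_1 \setminus c^*) \le d_1(Y_1) < d_1(Y_2) \le d_1(Y_3)$. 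For $Y_2$, its cheapest chore for agent $1$ is $c'_{\ell+1}$; the comparison against $Y_1$ is exactly the second stopping condition $Y_1 \cup c'_{\ell+1} \ge_1 Y_2 \setminus c'_{\ell+1}$, the comparison against $Y_2$ itself is trivial, and the comparison against $Y_3$ follows from $d_1(Y_2 \setminus c'_{\ell+1}) \le d_1(Y_2) \le d_1(Y_3)$.

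The main obstacle is really just the lone inequality $d_1(Y_2) \le d_1(Y_3)$, and within it the point that it genuinely relies on $\ell \ge 1$: without having moved at least $c'_1$ out of $X_2$, one cannot invoke Invariant \ref{inv2} for $X$ to bound $d_1(Y_2)$ by $d_1(X_3)$. Everything else is bookkeeping, provided one applies the additive cheapest-chore characterization of \dEFX-feasibility correctly and notes that $c'_{\ell+1}$ is indeed the $d_1$-minimizer over $Y_2$.
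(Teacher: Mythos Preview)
Your proof is correct and follows essentially the same approach as the paper: both establish the chain $Y_1 <_1 Y_2 \le_1 Y_3$ via $Y_2 \subseteq X_2 \setminus c'_1$ and Invariant~\ref{inv2}, then derive Invariant~\ref{inv2} for $Y$ from the chain and Invariant~\ref{inv1} for $Y$ from the chain together with the stopping condition $Y_1 \cup c'_{\ell+1} \ge_1 Y_2 \setminus c'_{\ell+1}$. Your version is arguably cleaner in two small respects: you invoke Invariant~\ref{inv2} directly for $X$ rather than citing Lemma~\ref{good-case} (whose hypothesis ``$X'_2$ is \dEFX-feasible for agent~$1$'' is precisely what fails in this branch, even though the part of its proof you need does not use that hypothesis), and you supply the justification for $\ell \ge 1$ that the paper merely asserts.
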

\begin{proof}
    We have
    \begin{align*}
        Y_1 &<_1 Y_2 \leq_1 X'_2 \setminus c'_1 &\tag{$Y_2 \subseteq X'_2 \setminus c'_1$}\\
        &\leq_1 X'_3 &\tag{Invariant \ref{inv2} holds for $X'$ by Lemma \ref{good-case}}\\
        &=_1 Y_3 &\tag{$Y_3 = X'_3$}.
    \end{align*}
    Therefore, Invariant \ref{inv2} holds. We also know that for all $c' \in Y_2$, $c' \geq_1 c_{\ell+1}$. Hence, for all $c' \in Y_2$, 
    \begin{align*}
        Y_1 \cup c' \geq_1 Y_1 \cup c'_{\ell+1} 
        \geq_1 Y_2 \setminus c'_{\ell+1} 
        \geq_1 Y_2 \setminus c'.
    \end{align*}
    Since $Y_1 <_1 Y_2$, Invariant \ref{inv1} holds too.
\end{proof}

Now if $Y_3$ is \dEFX-feasible for agent $3$, then all the invariants hold and $\Phi(Y) = |Y_1| + |Y_2| = |X'_1| + |X'_2| = |X_1| + |X_2| -1 < \Phi(X)$. In Section \ref{proportional}, we prove that if $Y_3$ is not \dEFX-feasible for agent $3$, we can obtain a proportional allocation. 

\subsection{Proportional Allocation When $Y_3$ Is Not \dEFX-feasible for Agent $3$}\label{proportional}
In the following observations, we prove that $Y_1$ and $Y_2$ are proportional for agent $1$, and $Y_2$ and $Y_3$ are proportional for agent $3$. Then without any further modification of the bundles, we allocate these bundles to the agents such that the final allocation is proportional.
\begin{observation}\label{prop-3}
    $d_3(Y_3) < d_3(M)/3$.
\end{observation}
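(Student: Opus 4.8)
The plan is to bound $d_3(Y_3)$ directly using Observation~\ref{move-c} together with the ordering $X_3 \leq_3 X_1 \leq_3 X_2$; note that the case hypothesis ``$Y_3$ is not \dEFX-feasible for agent~$3$'' is actually not needed for this particular bound, so it holds at this stage of the algorithm in any case. Recall that $Y_3 = X'_3 = X_3 \cup c_1$ with $c_1 = \argmin_{c \in X_1} d_3(c)$, so by additivity $d_3(Y_3) = d_3(X_3) + d_3(c_1)$.

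Since we are in the case where $X_3$ is the only \dEFX-feasible bundle for agent~$3$ and (after WLOG relabeling) $X_1 \leq_3 X_2$, Observation~\ref{move-c} applies and yields $X_1 \setminus c_1 >_3 X_3 \cup c_1$, that is $d_3(X_1) - d_3(c_1) > d_3(X_3) + d_3(c_1) = d_3(Y_3)$. Substituting $d_3(c_1) = d_3(Y_3) - d_3(X_3)$ from the identity above into this inequality cancels the $d_3(c_1)$ terms and gives $2\,d_3(Y_3) < d_3(X_1) + d_3(X_3)$. Now $X_2$ is agent~$3$'s least preferred of the three bundles, so $d_3(X_1) \leq d_3(X_2)$ and $d_3(X_3) \leq d_3(X_2)$, hence $d_3(X_1) + d_3(X_3) \leq 2\,d_3(X_2)$ and therefore $3\bigl(d_3(X_1) + d_3(X_3)\bigr) \leq 2\bigl(d_3(X_1) + d_3(X_2) + d_3(X_3)\bigr) = 2\,d_3(M)$, using that $(X_1, X_2, X_3)$ partitions $M$. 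Chaining, $6\,d_3(Y_3) < 3\bigl(d_3(X_1) + d_3(X_3)\bigr) \leq 2\,d_3(M)$, i.e.\ $d_3(Y_3) < d_3(M)/3$.

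I do not expect a genuine obstacle here: once Observation~\ref{move-c} is invoked, the remainder is a two-line computation. The only points that need a moment's care are checking that the hypotheses of Observation~\ref{move-c} are in force at this point, and that the ordering $d_3(X_3) \leq d_3(X_1) \leq d_3(X_2)$ is legitimately available — the first inequality because $X_3$ being the unique \dEFX-feasible bundle for agent~$3$ forces it to be her favorite, and the relative order of $X_1$ and $X_2$ because Invariants~\ref{inv1} and~\ref{inv2} and the potential $\Phi$ are symmetric in $X_1$ and $X_2$, so we may relabel to assume $d_3(X_1) \leq d_3(X_2)$.
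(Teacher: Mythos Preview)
Your argument is correct and uses the same two ingredients as the paper (Observation~\ref{move-c} and the ordering $X_3 \leq_3 X_1 \leq_3 X_2$). The paper's presentation is a touch more direct: it works with the partition $(X'_1,X'_2,X'_3)$, observes via Observation~\ref{move-c} and the ordering that $X'_3=Y_3$ is strictly the smallest of the three bundles for agent~$3$, and concludes $d_3(Y_3)<d_3(M)/3$ immediately; your version reaches the same conclusion by a short computation in the $X$ partition instead.
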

\begin{proof}
    We have
    \begin{align*}
        X'_2 =_3 X_2 &\geq_3 X_1 &\tag{$X'_2 = X_2$} \\
        &\geq_3 X'_1 &\tag{$X'_1 = X_1 \setminus c_1$}\\
        &>_3 X'_3 &\tag{Observation \ref{move-c}} \\
        &=_3 Y_3. &\tag{$Y_3 = X'_3$}
    \end{align*}
    Hence, $d_3(X'_3) < d_3(X'_1)$ and $d_3(X'_3) < d_3(X'_2)$. By additivity of $d_3(\cdot)$, we get that $d_3(X'_3) < d_3(M)/3$.
\end{proof}

\begin{observation}\label{prop-2}
    If $Y_3$ is not \dEFX-feasible for agent $3$, then $d_3(Y_2) < d_3(M)/3$.
\end{observation}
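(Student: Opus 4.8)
The plan is to identify the witness to $Y_3$ failing \dEFX-feasibility for agent $3$, argue that this witness must be the bundle $Y_2$, and then close the argument with a short chain of monotonicity inequalities together with Observation~\ref{prop-3}. Since $Y_3$ is not \dEFX-feasible for agent $3$, there exist a chore $c \in Y_3$ and an index $j \in [3]$ with $Y_3 \setminus c >_3 Y_j \cup c$. I would first rule out $j = 3$: this would mean $d_3(Y_3) - d_3(c) > d_3(Y_3) + d_3(c)$, i.e.\ $d_3(c) < 0$, which is impossible because the instance is non-degenerate and hence $d_3(c) > 0$.

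Next I would rule out $j = 1$, for which the key auxiliary fact is $Y_1 >_3 Y_3$. This follows because $Y_1 \supseteq X'_1$ gives $d_3(Y_1) \geq d_3(X'_1)$, while Observation~\ref{move-c} gives $X'_1 = X_1 \setminus c_1 >_3 X_3 \cup c_1 = X'_3 = Y_3$; chaining, $d_3(Y_1) \geq d_3(X'_1) > d_3(Y_3)$. Then, if $j = 1$, monotonicity would yield $Y_3 \setminus c >_3 Y_1 \cup c \geq_3 Y_1 >_3 Y_3 \geq_3 Y_3 \setminus c$, a contradiction. Hence $j = 2$, i.e.\ $Y_3 \setminus c >_3 Y_2 \cup c$.

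Finally, combining monotonicity of $d_3$ with this strict inequality and Observation~\ref{prop-3}, $d_3(Y_2) \leq d_3(Y_2 \cup c) < d_3(Y_3 \setminus c) \leq d_3(Y_3) < d_3(M)/3$, which is exactly the claim. (The degenerate possibility $Y_2 = \emptyset$ is trivial, since then $d_3(Y_2) = 0$.) I expect the only real subtlety — and thus the main, though mild, obstacle — to be the bookkeeping in the second and third sentences: confirming that the failure of \dEFX-feasibility is witnessed against one of the other two bundles rather than against $Y_3$ itself, and then pinning it down to $Y_2$ rather than $Y_1$ using the ordering facts inherited from Observations~\ref{move-c} and~\ref{prop-3}; once that is settled, everything reduces to additivity and monotonicity of $d_3$.
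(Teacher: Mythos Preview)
Your proof is correct and follows essentially the same route as the paper. The paper compresses your witness analysis into the single observation that a favorite bundle is always \dEFX-feasible, so the hypothesis forces $Y_3$ not to be agent~$3$'s favorite; combined with the same inequality $Y_1 >_3 Y_3$ (via Observation~\ref{move-c}) this pins the favorite to $Y_2$, yielding $d_3(Y_2) < d_3(Y_3) < d_3(M)/3$ by Observation~\ref{prop-3}---exactly your conclusion, just phrased one abstraction level higher.
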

\begin{proof}
    We have
    \begin{align*}
        Y_1 &\geq_3 X'_1 &\tag{$X'_1 \subset Y_1$}\\
        &>_3 X'_3 &\tag{Observation \ref{move-c}} \\
        &=_3 Y_3. &\tag{$Y_3 = X'_3$}
    \end{align*}
    Since $Y_3$ is not \dEFX-feasible for agent $3$, it cannot be her favorite bundle. Since $d_3(Y_1) > d_3(Y_3)$, we have $d_3(Y_2) < d_3(Y_3)$. By Observation \ref{prop-3}, $d_3(Y_3) < d_3(M)/3$. Hence, $d_3(Y_2) < d_3(M)/3$.
\end{proof}

Finally, in Observation \ref{prop-1}, we prove that $d_1(Y_1) \leq d_1(M)/3$ and $d_1(Y_2) \leq d_1(M)/3$.
\begin{observation}\label{prop-1}
    $d_1(Y_1) \leq d_1(M)/3$ and $d_1(Y_2) \leq d_1(M)/3$.
\end{observation}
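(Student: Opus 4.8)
The plan is to prove the two bounds separately, deducing the bound for $Y_1$ as an easy consequence of the bound for $Y_2$.

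For $Y_2$, the key observation is that $Y_2$ is contained in one of the greedy bundles of $X$ with at least one chore removed, so that the EFX‑feasibility guarantee produced by Algorithm~\ref{greedy} can be applied to that bundle directly. Concretely: by construction $Y_2 = X'_2 \setminus \{c'_1,\dots,c'_\ell\}$ with $\ell \ge 1$, where $X'_2 = X_2$ is one of the bundles of the allocation $X$ returned by Algorithm~\ref{greedy} on input $d=d_1$; in particular $c'_1 \in X_2$ and $X_2$ is EFX‑feasible for agent~$1$. Applying EFX‑feasibility of $X_2$ with the chore $c'_1$ gives $d_1(X_2 \setminus c'_1) \le d_1(X_j)$ for every $j \in [3]$; summing over $j \in \{1,2,3\}$ and using additivity of $d_1$ yields $3\,d_1(X_2\setminus c'_1) \le d_1(X_1)+d_1(X_2)+d_1(X_3) = d_1(M)$, hence $d_1(X_2\setminus c'_1) \le d_1(M)/3$. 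Since $\ell \ge 1$ we have $Y_2 \subseteq X_2 \setminus c'_1$, so monotonicity of $d_1$ gives $d_1(Y_2) \le d_1(X_2\setminus c'_1) \le d_1(M)/3$.

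For $Y_1$, I would simply invoke the stopping condition of the construction of $Y$, namely $Y_1 <_1 Y_2$, to obtain $d_1(Y_1) < d_1(Y_2) \le d_1(M)/3$. (Alternatively, one can combine $Y_1 <_1 Y_2 \le_1 Y_3$, established when checking the invariants for $Y$, with the fact that $(Y_1,Y_2,Y_3)$ partitions $M$, which gives $3\,d_1(Y_1) < d_1(M)$.)

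I expect the only real ``choice point'' here, rather than an obstacle, to be identifying the right fact to lean on: it is tempting to try to control $d_1(Y_2)$ via the stopping inequality $Y_1 \cup c'_{\ell+1} \ge_1 Y_2 \setminus c'_{\ell+1}$ together with a careful accounting of the sequence of chore moves, which is more delicate and does not cleanly close. Recognizing instead that $Y_2$ lives inside $X_2$ after deleting at least one chore, so that the ``almost‑proportional'' content of EFX‑feasibility for three bundles applies verbatim to the original greedy bundle $X_2$, makes the argument a few lines; after that no nontrivial step remains.
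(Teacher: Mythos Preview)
Your overall skeleton is exactly the paper's: bound $d_1(X_2\setminus c'_1)\le d_1(M)/3$, then use $Y_2\subseteq X_2\setminus c'_1$ (since $\ell\ge 1$) and $Y_1<_1 Y_2$. The gap is in how you justify the first inequality. You invoke EFX-feasibility of $X_2$ ``returned by Algorithm~\ref{greedy}'', but the algorithm of Section~\ref{Bads-EFX} is \emph{iterative}: after the first round the current $X$ is set to the previous $X'$ or $Y$, and from then on $X$ is only guaranteed to satisfy Invariants~\ref{inv1}--\ref{inv3}. Invariant~\ref{inv1} gives only \dEFX-feasibility of $X_2$ for agent~$1$, i.e.\ $X_2\setminus c'_1\le_1 X_j\cup c'_1$, not $X_2\setminus c'_1\le_1 X_j$. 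So your summation $3\,d_1(X_2\setminus c'_1)\le d_1(X_1)+d_1(X_2)+d_1(X_3)$ is not available in a general iteration (the $j=1$ term fails).

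The fix is immediate and is precisely what the paper does: use Invariant~\ref{inv1} to get $d_1(X_2\setminus c'_1)\le d_1(X_1\cup c'_1)$ and Invariant~\ref{inv2} to get $d_1(X_2\setminus c'_1)\le d_1(X_3)$, then observe that $(X_1\cup c'_1,\,X_2\setminus c'_1,\,X_3)$ is itself a partition of $M$, so summing those two inequalities with the trivial $d_1(X_2\setminus c'_1)\le d_1(X_2\setminus c'_1)$ gives $3\,d_1(X_2\setminus c'_1)\le d_1(M)$. After that your last two lines go through verbatim.
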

\begin{proof}
    Consider the allocation $\langle X_1 \cup c'_1, X_2 \setminus c'_1, X_3 \rangle$. Note that since Invariants \ref{inv1} and \ref{inv2} hold for $X$, we have $d_1(X_2 \setminus c'_1) \leq d_1(X_1 \cup c'_1)$ and $d_1(X_2 \setminus c'_1) \leq d_1(X_3)$. By additivity of $d_1$, we have $d_1(X_2 \setminus c'_1) \leq d_1(M)/3$. Now it suffices to prove that $d_1(Y_1) \leq d_1(X_2 \setminus c'_1)$ and $d_1(Y_2) \leq d_1(X_2 \setminus c'_1)$. Note that $d_1(Y_1) < d_1(Y_2)$ and $Y_2 \subseteq X_2 \setminus c'_1$. Therefore, we have $d_1(Y_1) < d_1(Y_2) \leq d_1(X_2 \setminus c'_1)$.
\end{proof}

At this stage of the algorithm, by Observation \ref{prop-1} we have that $d_1(Y_1) \leq d_1(M)/3$ and $d_1(Y_2) \leq d_1(M)/3$. Also by Observations \ref{prop-3} and \ref{prop-2}, we have $d_3(Y_3) < d_3(M)/3$ and $d_3(Y_2) < d_3(M)/3$. Now we let agent $2$ pick her favorite bundle. Let it be $Y_i$. Clearly, $d_2(Y_i) \leq d_2(M)/3$. As already argued before, no matter which bundle agent $2$ chooses, we can allocate one of $Y_1$ or $Y_2$ to agent $1$ and one of $Y_2$ or $Y_3$ to agent $3$. Therefore, we obtain a proportional allocation.

\section{Conclusion}
We have introduced a concept of $k$ \charity\ and showed the existence of an allocation that is both EF1 and fPO with at most $n-1$ \charity\ in the case of indivisible chores. Furthermore, such an allocation can be computed in polynomial time. A natural open question is whether there exists an allocation that is both EF1 and fPO with $< n-1$ \charity. 

Our second result shows the existence of allocation of indivisible chores among 3 agents that is either \dEFX or proportional. Since proportionality is a very strong guarantee, which is not possible to satisfy for every instance, this result is the first non-trivial result for a slight relaxation of EFX for 3 agents. A natural open question is whether EFX allocations exist for 3 agents.

\bibliographystyle{plainurl}
\bibliography{sample}

\begin{thebibliography}{10}

\bibitem{simple}
Hannaneh Akrami, Bhaskar~Ray Chaudhury, Jugal Garg, Kurt Mehlhorn, and Ruta
  Mehta.
\newblock {EFX} allocations: Simplifications and improvements.
\newblock {\em CoRR}, abs/2205.07638, 2022.

\bibitem{EF2X}
Hannaneh Akrami, Rojin Rezvan, and Masoud Seddighin.
\newblock An {EF2X} allocation protocol for restricted additive valuations.
\newblock In Luc~De Raedt, editor, {\em Proc.\ 31st Intl.\ Joint Conf.\ Artif.\
  Intell.\ (IJCAI)}, pages 17--23, 2022.

\bibitem{survey1}
Georgios Amanatidis, Georgios Birmpas, Aris Filos{-}Ratsikas, and Alexandros~A.
  Voudouris.
\newblock Fair division of indivisible goods: {A} survey.
\newblock In {\em Proc.\ 31st Intl.\ Joint Conf.\ Artif.\ Intell.\ (IJCAI)},
  pages 5385--5393, 2022.

\bibitem{limitedchores}
Haris Aziz, Jeremy Lindsay, Angus Ritossa, and Mashbat Suzuki.
\newblock Fair allocation of two types of chores.
\newblock {\em CoRR}, abs/2211.00879, 2022.

\bibitem{BarmanKrishna19}
Siddharth Barman and Sanath~Kumar Krishnamurthy.
\newblock On the proximity of markets with integral equilibria.
\newblock In {\em Proc.\ 33th Conf.\ Artif.\ Intell.\ (AAAI)}, pages
  1748--1755, 2019.

\bibitem{BKV18}
Siddharth Barman, Sanath~Kumar Krishnamurthy, and Rohit Vaish.
\newblock Finding fair and efficient allocations.
\newblock In {\em Proceedings of the 19th ACM Conference on Economics and
  Computation (EC)}, pages 557--574, 2018.

\bibitem{laszlo}
Benjamin~Aram Berendsohn, Simona Boyadzhiyska, and L{\'{a}}szl{\'{o}} Kozma.
\newblock Fixed-point cycles and approximate {EFX} allocations.
\newblock In {\em Proc.\ 47th Symp.\ Math.\ Foundations of Computer Science
  (MFCS)}, volume 241, pages 17:1--17:13, 2022.

\bibitem{BergerCFF}
Ben Berger, Avi Cohen, Michal Feldman, and Amos Fiat.
\newblock {A}lmost full {EFX} exists for four agents.
\newblock In {\em Proc.\ 36th Conf.\ Artif.\ Intell.\ (AAAI)}, 2022.

\bibitem{efonechore}
Umang Bhaskar, A.~R. Sricharan, and Rohit Vaish.
\newblock On approximate envy-freeness for indivisible chores and mixed
  resources.
\newblock In {\em Approximation, Randomization, and Combinatorial Optimization.
  Algorithms and Techniques, {APPROX/RANDOM}}, volume 207, pages 1:1--1:23,
  2021.

\bibitem{BogomolnaiaMSY17}
Anna Bogomolnaia, Herv{\'{e}} Moulin, Fedor Sandomirskiy, and Elena Yanovskaia.
\newblock Competitive division of a mixed manna.
\newblock {\em Econometrica}, 85(6):1847--1871, 2017.

\bibitem{BoodaghiansCM22}
Shant Boodaghians, Bhaskar~Ray Chaudhury, and Ruta Mehta.
\newblock Polynomial time algorithms to find an approximate competitive
  equilibrium for chores.
\newblock In {\em Proc.\ 33rd Symp.\ Discrete Algorithms (SODA)}, pages
  2285--2302, 2022.

\bibitem{comp-chores}
Simina Br{\^{a}}nzei and Fedor Sandomirskiy.
\newblock Algorithms for competitive division of chores.
\newblock {\em CoRR}, abs/1907.01766, 2019.

\bibitem{budish11}
Eric Budish.
\newblock The combinatorial assignment problem: Approximate competitive
  equilibrium from equal incomes.
\newblock {\em Journal of Political Economy}, 119(6):1061--1103, 2011.

\bibitem{CaragiannisGravin19}
Ioannis Caragiannis, Nick Gravin, and Xin Huang.
\newblock Envy-freeness up to any item with high {N}ash welfare: The virtue of
  donating items.
\newblock In {\em Proceedings of the 20th ACM Conference on Economics and
  Computation (EC)}, pages 527--545, 2019.

\bibitem{CaragiannisKMP016}
Ioannis Caragiannis, David Kurokawa, Herv{\'{e}} Moulin, Ariel~D. Procaccia,
  Nisarg Shah, and Junxing Wang.
\newblock The unreasonable fairness of maximum {Nash} welfare.
\newblock In {\em Proceedings of the 17th ACM Conference on Economics and
  Computation (EC)}, pages 305--322, 2016.

\bibitem{ChaudhuryGMM21}
Bhaskar~Ray Chaudhury, Jugal Garg, Peter McGlaughlin, and Ruta Mehta.
\newblock Competitive allocation of a mixed manna.
\newblock In {\em Proc.\ 32nd Symp.\ Discrete Algorithms (SODA)}, pages
  1405--1424, 2021.

\bibitem{combinatorial}
Bhaskar~Ray Chaudhury, Jugal Garg, Peter McGlaughlin, and Ruta Mehta.
\newblock Competitive equilibrium with chores: Combinatorial algorithm and
  hardness.
\newblock In {\em Proc.\ 23rd Conf.\ Economics and Computation (EC)}, pages
  1106--1107. {ACM}, 2022.

\bibitem{ChaudhuryGMM22}
Bhaskar~Ray Chaudhury, Jugal Garg, Peter McGlaughlin, and Ruta Mehta.
\newblock On the existence of competitive equilibrium with chores.
\newblock In {\em Proc.\ 13th Symp.\ Innovations in Theoret.\ Computer Science
  (ITCS)}, volume 215, pages 41:1--41:13, 2022.

\bibitem{ChaudhuryGM20}
Bhaskar~Ray Chaudhury, Jugal Garg, and Kurt Mehlhorn.
\newblock {EFX} exists for three agents.
\newblock In {\em Proc.\ 21st Conf.\ Economics and Computation (EC)}, pages
  1--19. {ACM}, 2020.

\bibitem{CKMS21}
Bhaskar~Ray Chaudhury, Telikepalli Kavitha, Kurt Mehlhorn, and Alkmini
  Sgouritsa.
\newblock A little charity guarantees almost envy-freeness.
\newblock {\em {SIAM} J. Comput.}, 50(4):1336--1358, 2021.

\bibitem{DPSY08}
Nikhil~R. Devanur, Christos~H. Papadimitriou, Amin Saberi, and Vijay~V.
  Vazirani.
\newblock Market equilibrium via a primal--dual algorithm for a convex program.
\newblock {\em J. ACM}, 55(5), 2008.

\bibitem{ebadian2021fairly}
Soroush Ebadian, Dominik Peters, and Nisarg Shah.
\newblock How to fairly allocate easy and difficult chores.
\newblock In {\em Proc.\ 21st Conf.\ Auton.\ Agents and Multi-Agent Systems
  (AAMAS)}, pages 372--380, 2022.

\bibitem{EisenbergG59}
Edmund Eisenberg and David Gale.
\newblock Consensus of subjective probabilities: {T}he {P}ari-{M}utuel method.
\newblock {\em Ann.\ Math.\ Stat.}, 30(1):165--168, 1959.

\bibitem{FeigeN22}
Uriel Feige and Alexey Norkin.
\newblock Improved maximin fair allocation of indivisible items to three
  agents.
\newblock {\em CoRR}, abs/2205.05363, 2022.

\bibitem{FisherThesis}
Irving Fisher.
\newblock {\em Mathematical Investigations in the Theory of Value and Prices}.
\newblock PhD thesis, Yale University, 1891.

\bibitem{GargHMS21}
Jugal Garg, Martin Hoefer, Peter McGlaughlin, and Marco Schmalhofer.
\newblock When dividing mixed manna is easier than dividing goods: Competitive
  equilibria with a constant number of chores.
\newblock In {\em Proc.\ 14th Symp.\ Algorithmic Game Theory (SAGT)}, pages
  329--344, 2021.

\bibitem{GargM20}
Jugal Garg and Peter McGlaughlin.
\newblock Computing competitive equilibria with mixed manna.
\newblock In {\em Proc.\ 19th Conf.\ Auton.\ Agents and Multi-Agent Systems
  (AAMAS)}, pages 420--428, 2020.

\bibitem{GargM21}
Jugal Garg and Aniket Murhekar.
\newblock On fair and efficient allocations of indivisible goods.
\newblock In {\em Proc.\ 35th Conf.\ Artif.\ Intell.\ (AAAI)}, 2021.

\bibitem{garg2022fair}
Jugal Garg, Aniket Murhekar, and John Qin.
\newblock Fair and efficient allocations of chores under bivalued preferences.
\newblock In {\em Proc.\ 36th Conf.\ Artif.\ Intell.\ (AAAI)}, 2022.

\bibitem{chores3}
Jugal Garg, Aniket Murhekar, and John Qin.
\newblock Improving fairness and efficiency guarantees for allocating
  indivisible chores.
\newblock {\em CoRR}, abs/2212.02440, 2022.

\bibitem{propx}
Bo~Li, Yingkai Li, and Xiaowei Wu.
\newblock Almost (weighted) proportional allocations for indivisible chores.
\newblock In {\em Proceedings of the ACM Web Conference}, page 122–131, 2022.

\bibitem{LiptonMMS04}
Richard~J. Lipton, Evangelos Markakis, Elchanan Mossel, and Amin Saberi.
\newblock On approximately fair allocations of indivisible goods.
\newblock In {\em Proc.\ 5th Conf.\ Economics and Computation (EC)}, pages
  125--131, 2004.

\bibitem{mahara2021}
Ryoga Mahara.
\newblock Extension of additive valuations to general valuations on the
  existence of {EFX}.
\newblock In {\em Proc.\ 29th European Symp.\ Algorithms (ESA)}, pages
  66:1--66:15, 2021.

\bibitem{microeconomics-book}
Andreu Mas-Colell, Michael Whinston, and Jerry Green.
\newblock {\em Microeconomic Theory}.
\newblock Oxford University Press, 1995.

\bibitem{Orlin10}
James Orlin.
\newblock Improved algorithms for computing {F}isher's market clearing prices.
\newblock In {\em Proc.\ 42nd Symp.\ Theory of Computing (STOC)}, pages
  291--300, 2010.

\bibitem{tEFX3}
Lang Yin and Ruta Mehta.
\newblock On the envy-free allocation of chores.
\newblock {\em CoRR}, abs/2211.15836, 2022.

\bibitem{bivalued3}
Shengwei Zhou and Xiaowei Wu.
\newblock Approximately {EFX} allocations for indivisible chores.
\newblock In Luc~De Raedt, editor, {\em Proceedings of the Thirty-First
  International Joint Conference on Artificial Intelligence, {IJCAI} 2022,
  Vienna, Austria, 23-29 July 2022}, pages 783--789. ijcai.org, 2022.

\end{thebibliography}

\appendix
\section{Non-Degenerate Instances}\label{non-degenerate}
We call an instance $I = \langle [n], M, \mathcal{D} \rangle$ non-degenerate if and only if no agent dislikes two different sets equally, i.e., $\forall i \in [n]$ we have $S \neq_i T$ for all $S \neq T$. We adapt the technique in~\cite{ChaudhuryGM20} and  show that it suffices to deal with non-degenerate instances when there are $n$ agents with additive disutilities, i.e., if there exists a \dEFX~allocation in all non-degenerate instances, then there exists a \dEFX~allocation in all instances. 

Let $M = \left\{c_1, c_2, \dots, c_m \right\}$. We perturb any instance $\mathcal{I} = \langle [n],M ,\mathcal{D} \rangle$ to $\mathcal{I}(\epsilon) = \langle [n],M ,\mathcal{D}(\epsilon) \rangle$, where for every $d_i \in \mathcal{D}$ we define $d'_i \in \mathcal{D}(\epsilon)$, as

$$ d'_i(c) = d_i(c_j) + \epsilon \cdot 2^{j} \quad \quad \forall S \subseteq M$$ 

\begin{lemma}
	\label{non-degeneracy-technical} \label{non-degeneracy-main}
	Let $$\delta = \min_{i \in [n]} \min_{S,T \colon d_i(S) \neq d_i(T)} \abs{ d_i(S) - d_i(T)}$$ 
	and let $\epsilon > 0$ be such that $\epsilon \cdot 2^{m+1}  < \delta$. Then
	\begin{enumerate}
		\item For any agent $i$ and $S,T \subseteq M$ such that $d_i(S) > d_i(T)$, we have $d'_i(S) > d'_i(T)$.
		\item $\mathcal{I}(\epsilon)$ is a non-degenerate instance. Furthermore, if $X = \langle X_1, \ldots ,X_n \rangle$ is a \dEFX~allocation for $\mathcal{I}(\epsilon)$ then $X$ is also a \dEFX~allocation for $\mathcal{I}$.
	\end{enumerate}
\end{lemma}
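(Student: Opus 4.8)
The plan is to exploit additivity together with the observation that the perturbation weights $\epsilon\cdot 2^{j}$ behave like a base-two gadget: their subset sums are pairwise distinct (a binary encoding) yet their total is strictly below $2^{m+1}$. Concretely, I would first record that, extending $d'_i$ additively over $M$ (note the $d'_i$ are still normalized and strictly monotone, so $\mathcal{I}(\epsilon)$ is a legitimate instance),
\[
 d'_i(S) = d_i(S) + \epsilon\, w(S), \qquad \text{where } w(S) := \sum_{j \,:\, c_j \in S} 2^{j}.
\]
The two facts about $w$ that power the whole argument are: (i) $w$ is injective on $2^{M}$, since $w(S)$ is precisely the integer whose binary digit in position $j$ equals $1$ exactly when $c_j \in S$; and (ii) $0 \le w(S) \le \sum_{j=1}^{m} 2^{j} = 2^{m+1} - 2 < 2^{m+1}$ for every $S \subseteq M$.

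For the first item, assume $d_i(S) > d_i(T)$. Since $d_i(S) \neq d_i(T)$, the definition of $\delta$ gives $d_i(S) - d_i(T) \ge \delta$, and hence
\[
 d'_i(S) - d'_i(T) = \bigl(d_i(S) - d_i(T)\bigr) + \epsilon\bigl(w(S) - w(T)\bigr) \ge \delta - \epsilon\, w(T) > \delta - \epsilon \cdot 2^{m+1} > 0,
\]
using $w(T) < 2^{m+1}$ and the hypothesis $\epsilon \cdot 2^{m+1} < \delta$. This is exactly $d'_i(S) > d'_i(T)$.

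For the second item I would argue non-degeneracy and the \dEFX\ transfer property separately. Given distinct $S \neq T$ and an agent $i$: if $d_i(S) \neq d_i(T)$, the first item applied to whichever of $S,T$ carries the larger $d_i$-value yields $d'_i(S) \neq d'_i(T)$; if $d_i(S) = d_i(T)$, then $d'_i(S) - d'_i(T) = \epsilon\bigl(w(S) - w(T)\bigr) \neq 0$ by injectivity of $w$. In either case $d'_i(S) \neq d'_i(T)$, so $\mathcal{I}(\epsilon)$ is non-degenerate. For the transfer statement, suppose $X$ is \dEFX\ for $\mathcal{I}(\epsilon)$ but not for $\mathcal{I}$; then there are agents $i,j$ and a chore $c \in X_i$ with $d_i(X_i \setminus c) > d_i(X_j \cup c)$, where $X_i \setminus c$ and $X_j \cup c$ are genuine subsets of $M$ because $X$ is a partition. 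Applying the first item with $S = X_i \setminus c$ and $T = X_j \cup c$ gives $d'_i(X_i \setminus c) > d'_i(X_j \cup c)$, contradicting that $X$ is \dEFX\ in $\mathcal{I}(\epsilon)$; hence $X$ is \dEFX\ for $\mathcal{I}$.

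I do not expect a real obstacle: this is a standard symbolic-perturbation argument, and the only delicate point is the well-definedness of $\delta$. If for some agent $i$ the disutility function is constant on all of $2^{M}$, the inner minimum in the definition of $\delta$ is over the empty set; one then reads $\delta = +\infty$ (equivalently, restricts the outer minimum to agents that distinguish at least two sets), which only makes the hypothesis on $\epsilon$ easier to satisfy and leaves both items intact — the first item is vacuous for such an $i$, while injectivity of $w$ still forces $d'_i$ to separate all subsets. Everything else (the additive extension of $d'_i$ and the bookkeeping of which subset plays the role of $S$ versus $T$ in each comparison) is routine and proceeds exactly as sketched above.
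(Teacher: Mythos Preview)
Your proposal is correct and follows essentially the same route as the paper: both arguments compute $d'_i(S)-d'_i(T)$ as the original gap plus $\epsilon$ times a difference of powers of two, bound the perturbation by $\epsilon\cdot 2^{m+1}<\delta$ for part~1, invoke the binary-encoding injectivity for non-degeneracy, and finish the \dEFX\ transfer by contraposition via part~1. Your introduction of the notation $w(S)$ and the explicit remark on the well-definedness of $\delta$ are cosmetic additions, not a different approach.
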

\begin{proof}
	For the first statement of the lemma, observe that 
	\begin{align*}
	d'_i(S) - d'_i(T)  &=  d_i(S) - d_i(T)  + \epsilon(\sum_{c_j \in S \setminus T}2^j - \sum_{c_j \in T \setminus S}2^j) \\
	&\geq \delta -  \epsilon \sum_{c_j \in T \setminus S}2^j\\
	&\geq \delta -  \epsilon \cdot (2^{m+1}-1)\\
	&>0 \enspace .  
	\end{align*}
	
	For the second statement of the lemma, consider any two sets $S,T \subseteq M$ such that $S \neq T$. Now, for any $i \in [n]$, if $d_i(S) \neq d_i(T)$, we have $d'_i(S) \neq d'_i(T)$ by the first statement of the lemma. If $d_i(S) = d_i(T)$, we have $d'_i(S) - d'_i(T) = \epsilon(\sum_{c_j \in S \setminus T}2^j - \sum_{c_j \in T \setminus S}2^j) \neq 0$ (as $S \neq T$). Therefore, $\mathcal{I}(\epsilon)$ is non-degenerate.
	
	For the final claim, let us assume that $X$ is a \dEFX~allocation in $\mathcal{I}(\epsilon)$ and not a \dEFX~allocation in $\mathcal{I}$. Then there exist $i,j$, and $c \in X_i$ such that $d_i(X_i \setminus c) > d_i(X_j \cup c)$. In that case, we have $d'_i(X_i \setminus c) > d'_i(X_j \cup c)$ by the first statement of the lemma, implying that $X$ is not a \dEFX~allocation in $\mathcal{I}(\epsilon)$ either, which is a contradiction. 
\end{proof}

\end{document}